\theoremstyle{remark}
\newtheorem{theorem}{\hspace{1em}\textit{Theorem}}
\newtheorem{lemma}{\hspace{1em}\textit{Lemma}}
\newtheorem{remark}{\hspace{1em}Remark}
\begin{document}

\title{Safe Reinforcement Learning Beyond Baseline Control: A Hierarchical Framework for Space Triangle Tethered Formation System}

\author{Xinyi Tao,~Panfeng Huang,~Fan Zhang
\thanks{This work was supported in part by the National Natural Science Foundation of China under Grant Nos.62222313, 62327809, 62573356.\textit{(Corresponding author: Fan Zhang.)}}
\thanks{The authors are with the Research Center for Intelligent Robotics, School of Astronautics, Northwestern Polytechnical University, Xi'an, 710072 China, E-mail: (\href{xytao@mail.nwpu.edu.cn}{xytao@mail.nwpu.edu.cn};   \href{pfhuang@nwpu.edu.cn}{pfhuang@nwpu.edu.cn}; \href{fzhang@nwpu.edu.cn}{fzhang@nwpu.edu.cn})}}




\maketitle

\begin{abstract}
 Triangular tethered formation system (TTFS) provide a promising platform for deep space exploration and distributed sensing due to its intrinsic spatial-orientation stability and capability of adjusting distances among node satellites through deployment and retrieval of tethers. 
 However, due to the coupled tether–satellite dynamics and disturbance sensitivity of TTFS, traditional control methods struggle to achieve a balanced trade-off among configuration accuracy requirements, tension constraints, and energy efficiency consumption throughout the deployment process.
 In this paper, a novel model-reference reinforcement learning control framework is proposed for TTFS. By integrating baseline model-based control with a Soft Actor–Critic (SAC) compensator, the proposed method simultaneously achieves high-precision tracking, fuel efficiency, and compliance with tension limits. A hierarchical training scheme is developed to address the convergence difficulties arising from strongly coupled states in centralized training, while tailored reward functions, reset conditions, and normalization criteria are designed to accelerate training convergence. Closed-loop stability of the overall control law is rigorously proven using Lyapunov methods. Simulation results demonstrate that the proposed controller reduces steady-state tracking errors by over 96\% for tethers and 99\% for node satellites, while cutting fuel consumption by two orders of magnitude compared with the baseline method. These results validate the effectiveness and stability of the proposed approach for TTFS deployment control.
\end{abstract}

\begin{IEEEkeywords}
space tethered system, reinforcement learning, formation tracking control, optimal control
\end{IEEEkeywords}

\section{INTRODUCTION}
\IEEEPARstart{T}riangular tethered formation system (TTFS) is a kind of flexible and controllable large-scale coupling system \cite{hong2025review} with great potential in deep space exploration, large baseline interferometry, and distributed sensing, etc \cite{chung2005dynamics,moccia1993attitude,farley2004tethered}. Compared with free-flying formation systems, TTFS has longer life cycle for its spatial-orientation stability in spinning case \cite{zhang2021stable} and allows variable separations among nodes through deployment and retrieval of tethers. Howerver, tether-satellite coupling, disturbance sensitivity, and actuation limits during long-term in-orbit operation challenge the control strategy design. Deployment control of TTFS is a critical issue must be addressed \cite{kumar2006review}. 

Deployment control approaches for TTFS are categorized into PID control \cite{djebli1999laws, chen2020analysis,matunaga2001coordinated}, sliding mode control(SMC) \cite{9743568,li2022fractional,zhang2023fractional,huang2023event,2022huangevent,huang2023event1,mohsenipour2020robust} and optimization-based control \cite{rigatos2024nonlinear,guang2019optimal,li2020discrete}. The first two control approaches are primarily concerned with 
trajectory tracking accuracy, but they frequently overlook actuator limits and energy comsumption. Under coupling dynamics and disturbances those methods are effective, but may result in large energy comsumption and careful tuning to avoid tension violations. For instance, \cite{sun2025adaptive} develops a fault-tolerant sliding mode control strategy which performs well in trajectory tracking, but the control input exceeds 50N. After converting the generalized control input in \cite{huang2023event,2022huangevent,huang2023event1} to actual value, it was discovered exceeding the actuator limit. Optimization-based control methods include traditional optimal control \cite{kang2019unified}, model predictive control(MPC) \cite{li2020discrete}. These methods depend on high real-time computing complexity and precise TTFS modeling. 

Recently, reinforcement learning (RL) as another optimization-based method has been applied in diverse control settings, including maritime vessel path following under wind and wave disturbances \cite{zheng2022soft}, autonomous racing \cite{song2023reaching}, disturbance‐mitigation in cyber-physical systems \cite{wu2023deep} and helicopter system \cite{11084854},etc. Among RL algorithms, Soft Actor-Critic (SAC) stands out due to its desirable properties \cite{haarnoja2018soft1}: entropy regularization that encourages exploration while balancing exploitation; off-policy learning enabling sample reuse; robustness to hyperparameter variations and empirical stability in continuous control tasks. In contrast, on-policy methods like PPO \cite{schulman2017proximal}, TRPO \cite{schulman2015trust} often require more data and are more prone to getting trapped in local optima, whereas purely value-based methods like DQN \cite{mnih2015human} may struggle in continuous action spaces.

However, applying RL to solve the deployment control problem of TTFS still poses three principal challenges. Firstly, it is challenging for model-free RL
to ensure closed-loop stability \cite{2021Model}. Secondly, high-dimensional and strongly coupled states might impede RL training convergence, resulting in slower learning and increased risk of poor local minima \cite{tipaldi2022reinforcement,dong2024reinforcement}. Thirdly, combined with the approximation errors, limited exploration, and potential instability during policy learning, RL tends to provide conservative adjustments rather than disruptive improvements, making it challenging to substantially outperform the baseline control law \cite{tang2025deep}.

With the consideration of the merits and limitations of
existing RL methods, in this paper, \IEEEpubidadjcol we propose a novel learning-based control strategy for TTFS that combines a baseline controller and a learned SAC compensator. The control objective is to optimize a composite performance measure under external disturbances that takes tracking precision, fuel consumption, and tether tension limits into account. The major contributions are as follows:
\begin{enumerate}
	\item Propose a model-referenced RL control scheme that integrates the advantages of model-based control and model-free reinforcement learning, accompanied by a theoretical proof of closed-loop system stability.
	\item The proposed model-reference RL algorithm eliminates the need for specific values of uncertainty characteristics, such as Lipschitz constants, bounds, and frequencies.
	\item Propose a novel hierarchical training framework that resolves the convergence difficulties in centralized training framework caused by tether-spacecraft coupling dynamics. 
	\item With tailored environmental reset conditions, reward functions, and normalization criteria, the training convergence is accelerated and the propsed algorithm is significantly superior to the baseline algorithm in terms of both tracking accuracy and energy consumption.
\end{enumerate}

The remaining paper is arranged as follows:
In Section \ref{Sec:sec2} the TTFS model is established..
Section \ref{Sec:sec3} gives the control framework of the model-reference rl control. Precise implementation details are introduced in Section \ref{Sec:sec4} Section \ref{Sec:sec5} demonstrates the closed-loop stability proof. Numerical simulation is carried out in Section \ref{Sec:sec6} to illustrate the effectiveness of the controller. Finally, Section \ref{Sec:sec7} summarizes the paper.

\textbf{Notations:}$\left(\cdot\right)_x$and$\left(\cdot\right)_y$refer to  the $x$-axis and $y$-axis component of a vector.$\|\cdot\|_2$denotes the Euclidean norm. $\mathbb{R}$ and $\mathbb{R}^n$ are sets of real numbers and n-dimensional real vectors respectively. $\mathbb{E}_(\cdot)\left[\cdot\right]$ represents the expectation operator. $\operatorname{anti\text{-}diag}\{a_1,a_2,\cdots,a_n\}$ denotes an anti-diagonal matrix whose non-zero entries lie on the anti-diagonal, from the top-right to the bottom-left

\section{SYSTEM DYNAMICS}
\label{Sec:sec2}
TTFS is a rigid-flexible system consisting of 3 satellites connected by $3$ tethers which are driven by reel motors.

As is shown in Fig. \ref{fig:Fig1}, $I-XYZ$ is the inertial frame where the origin $I$ located at the center of the Earth. Axis $IX$ is directed toward the ascending node of the orbit; axis $IZ$ is normal to the orbital plane, and the $IY$-axis is defined to complete a right-handed coordinate system.  The orbit frame $o-xyz$ is attached to the centroid of the TTFS. Axis $ox$ is aligned opposite to the velocity vector of the system centroid, the $oy$-axis points points radially outward from the Earth to the system centroid, while the $oz$-axis is determined by the right-hand rule. To simplify the analysis process, the following assumptions are usually made:


\begin{enumerate}
	\item The motion of the formation is restricted within the orbital plane, and any out-of-plane dynamics are ignored.
	\item The centroid of system follows a circular Keplerian trajectory, and its orbital angular velocity is considered constant and denoted by $\omega_0$.
	\item Each satellite in TTFS is modeled as a point mass, as its physical dimensions are negligible relative to the characteristic length of the formation.
	\item The tethers are modeled as elastic but incompressible rods. This assumption is justified by the ability of the reel mechanisms to maintain the tethers a microtensioned or microrelaxed state during deployment.
\end{enumerate}\vspace*{-.5pc}
\begin{figure}[H] 
	\centerline{\includegraphics[width=0.83\linewidth]{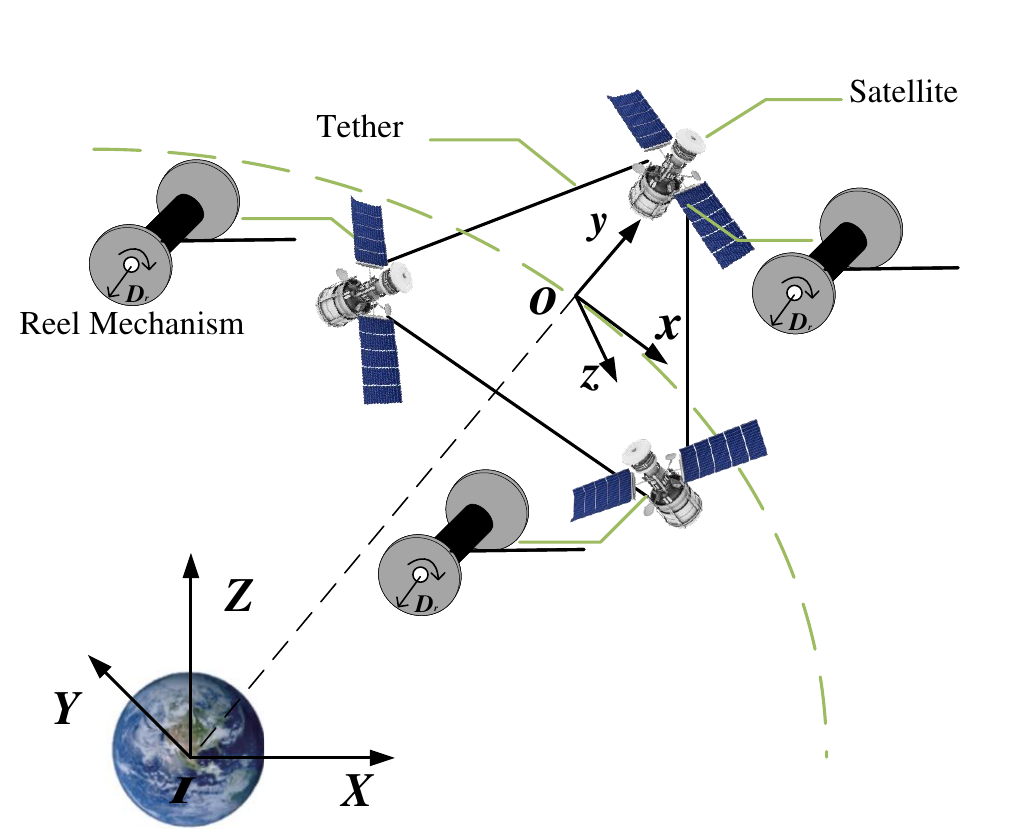}}
	\caption{Schematic diagram of space triangle tethered formation system}
	\label{fig:Fig1}
\end{figure}
Define $\boldsymbol{r}_{i}=\left[x_i,y_i\right]^T\in\mathbb{R}^2$as the position vector of the $i$-th satellite in the orbit frame, $\boldsymbol{T}_{i}$ as tension acting on the $i$-th satellite, $\boldsymbol\tau_i $ as the acceleration provided by thrust, $\boldsymbol{d}_i$ as the space perturbations. The dynamic model of satellites in the orbit frame can be expressed as follows:
\begin{equation}
	\label{E1}
	\left\{
	\begin{aligned}
		&\ddot{x}_i=&&\!\!\!\!\!2ny_i+n^2x-\frac{\mu x_i}{[x_i^2+(R+y_i)^2+z_i^2]^{\frac{3}{2}}}\\
		&&&\!\!\!\!\!+\left(\frac{\boldsymbol{T}_{i}}{m_i}\right)_x+\left(\boldsymbol{\tau}_i\right)_x+(\boldsymbol{d}_i)_x\\
		&\ddot{y}_i=&&\!\!\!\!\!-\!2n\dot{x}_i\!+\!n^2(R+y_i)\!-\!\frac{\mu(y_i\!+\!R)}{[x_i^2\!+\!(R\!+\!y_i)^2\!+\!z_i^2]^{\frac{3}{2}}}\\
		&&&\!\!\!\!\!+(\frac{\boldsymbol{T}_{i}}{m_i})_y+\left(\boldsymbol{\tau}_i\right)_y+(\boldsymbol{d}_i)_y
	\end{aligned}
	\right.
\end{equation}
where $n$ represents the orbital angular velocity, and $i,j,k = 1,2,3$.

The tension force is modeled as:
\begin{equation}
	\label{E2}
	\begin{aligned}
		\boldsymbol{T}_{i} = 
		& -\frac{K_T}{l_{ij}}\left(|\boldsymbol{r}_i - \boldsymbol{r}_j| - l_{ij} \right)
		\frac{\boldsymbol{r}_i - \boldsymbol{r}_j}{|\boldsymbol{r}_i - \boldsymbol{r}_j|} \\
		& -\frac{K_T}{l_{ik}}\left(|\boldsymbol{r}_i - \boldsymbol{r}_k| - l_{ik} \right)
		\frac{\boldsymbol{r}_i - \boldsymbol{r}_k}{|\boldsymbol{r}_i - \boldsymbol{r}_k|} \\
		K_T = 
		& \begin{cases}
			0,  & |\boldsymbol{r}_i - \boldsymbol{r}_j| \leq l_{ij} \\
			EA, & |\boldsymbol{r}_i - \boldsymbol{r}_j| > l_{ij}
		\end{cases}
	\end{aligned}
\end{equation}

where $K_T$, $E$ and $A$ represent the elasticity coefficient, the Young's modulus and the cross-sectional area of tethers respectively. Denote $l_{12}=l_{21}=l_1$, $l_{23}=l_{32}=l_2$, $l_{13}=l_{31}=l_3$.

The dynamic model of reel motors can be expressed as:
\begin{equation}
	\label{E3}
	\left\{\begin{aligned}
		&\frac{T_m}{D_r}\ddot{l}_i+\frac{1}{D_r}\dot{l}_i+\delta_i(l_i,\dot{l_i})=\frac{1}{k_e}\nu_i\\
		&T_m=\frac{R_aJ}{k_ek_m}
	\end{aligned}
	\right.
\end{equation}
where $l$ is the natural (unstressed) length of the tether; $\nu_i$ is the voltage applied to the armature; $\delta(l_i,\dot{l}_i)$ represents unmodeled
dynamics; $T_m$,$R_a$,$k_e$, $k_m$ and $J$ are parameters of reel motors. Detailed explanations are given in our previous work \cite{tao2023fixed}.

\section{Control Scheme Design}
\label{Sec:sec3}
\subsection{Preliminaries}
Denote $\boldsymbol{v}_i=\left[v_{ix},v_{iy}\right]^T\in\mathbb{R}^2$ as the velocity vector of the $i$-th satellite in the orbit frame. Let $\boldsymbol{r}_{sat}=\left[\boldsymbol{r}_1^T,\boldsymbol{r}_2^T,\boldsymbol{r}_3^T\right]^T$, $\boldsymbol{v}_{sat}=\left[\boldsymbol{v}_1^T,\boldsymbol{v}_2^T,\boldsymbol{v}_3^T\right]^T$, $\boldsymbol{\varepsilon}=\left[\boldsymbol{r}_{sat}^T,\boldsymbol{v}_{sat}^T\right]^T$, $\boldsymbol{u}=\left[\boldsymbol{\tau}_1^T,\boldsymbol{\tau}_2^T,\boldsymbol{\tau}_3^T\right]^T$, $\boldsymbol{T}=\left[\boldsymbol{T}_1^T,\boldsymbol{T}_2^T,\boldsymbol{T}_3^T\right]^T$, $\boldsymbol{D}=[\boldsymbol{d}_1^T,\boldsymbol{d}_2^T,\boldsymbol{d}_3^T]^T$. Eq. \eqref{E1} can be reformulated as: 
\begin{equation}
	\label{E4}
	\setlength{\arraycolsep}{1pt} 
	\dot{\boldsymbol{\varepsilon}} \!\!=\!\!
	\left[
	\begin{array}{cc}
		0 & I_6 \\
		\boldsymbol{A}(\boldsymbol{r}_{sat}) & I_3 \otimes C
	\end{array}
	\right]
	\! \boldsymbol{\varepsilon} \!+\!
	\left[
	\begin{array}{c}
		0 \\
		I_6
	\end{array}
	\right]
	\!\!(\boldsymbol{u}\!+\!\boldsymbol{T}) \!+\!
	\left[
	\begin{array}{c}
		0 \\
		\boldsymbol{G}(\boldsymbol{r}_{sat}) \!+\! \boldsymbol{D}
	\end{array}
	\right]
\end{equation}
where $\boldsymbol{A}(\boldsymbol{r}_{sat})=\operatorname{diag}\{\rho_1,\rho_2,\rho_3\}\bigotimes I_2$, $\rho_i=\left[x_i^2+\right.\\\left.(R+y_i)^2+z_i^2\right]^{-\frac{3}{2}}$, $C=I_3\bigotimes\operatorname{anti\text{-}diag}\{2n,-2n\}$, $\boldsymbol{G}(\boldsymbol{r}_{sat})=\left[\boldsymbol{g}_1^T,\boldsymbol{g}_2^T,\boldsymbol{g}_3^T\right]^T$, $\boldsymbol{g}_i=\left[0,\mu R^{-2}-\mu R\rho_i\right]^T$.

Assume that the accurate model Eq.\eqref{E4} is not available. Neglecting the disturbance term, the nonlinear tension term, and the second-order infinitesimal terms from the Taylor expansion of $\rho_i$, the nominal linear model of Eq.\eqref{E4} is obtained:
\begin{equation}
	\label{E5}
	\setlength{\arraycolsep}{5pt} 
	\dot{\boldsymbol{\varepsilon}}_m =
	\left[
	\begin{array}{cc}
		0 & I_6 \\
		A_m & I_3 \otimes C
	\end{array}
	\right]
	\boldsymbol{\varepsilon}_m+
	\left[
	\begin{array}{c}
		0 \\
		I_6
	\end{array}
	\right]
	\boldsymbol{u}_m
\end{equation} 
where $A_m=\operatorname{diag}\{0,3n^2\}\bigotimes I_3$.

Denote $\boldsymbol{\iota}=\left[l_1,l_2,l_3\right]^T\in\mathbb{R}^3$ as the stacked vector of 3 tethers, $\boldsymbol{\chi}=\left[\dot{l}_1,\dot{l}_2,\dot{l}_3\right]$. Let $\boldsymbol{\eta}=\left[\boldsymbol{\iota}^T,\boldsymbol{\chi}^T\right]^T$, $\boldsymbol{\nu}=\left[\nu_1,\nu_2,\nu_3\right]^T$, $\boldsymbol{\delta}=\left[\delta_1,\delta_2,\delta_3\right]^T$. Eq. \eqref{E3} can be reformulated as:
\begin{equation}
	\label{E6}
	\setlength{\arraycolsep}{2pt} 
	\dot{\boldsymbol{\eta}} =
	\left[
	\begin{array}{cc}
		0 & I_3 \\
		0 & -\frac{1}{T_m}{I}_3 
	\end{array}
	\right]
	\boldsymbol{\eta}+
	\left[
	\begin{array}{c}
		0 \\
		\frac{D_r}{T_mK_e}I_3
	\end{array}
	\right]
	\boldsymbol{\nu}+
	\left[
	\begin{array}{c}
		0 \\
		\frac{D_r}{T_m}I_3
	\end{array}
	\right]\boldsymbol{\delta}
\end{equation}

By neglecting unmodeled dynamics terms, Eq.\eqref{E6} can be rewritten as:
\begin{equation}
	\label{E7}
	\setlength{\arraycolsep}{5pt} 
	\dot{\boldsymbol{\eta}}_m =
	\left[
	\begin{array}{cc}
		0 & I_3 \\
		0 & -\frac{1}{T_m}I_3 
	\end{array}
	\right]
	\boldsymbol{\eta}_m+
	\left[
	\begin{array}{c}
		0 \\
		\frac{D_r}{T_mK_e}I_3
	\end{array}
	\right]
	\boldsymbol{\nu}_m
\end{equation} 

Suppose there exist control laws $\boldsymbol{u}_m$ and $\boldsymbol{\nu}_m$ such that the states of the nominal systems \eqref{E5} and \eqref{E6} track the reference signals $\boldsymbol{\varepsilon}_r$ and $\boldsymbol{\eta}_m$, i.e., $\|\boldsymbol{\varepsilon}_m-\boldsymbol{\varepsilon}_r\|\rightarrow0$, $\|\boldsymbol{\eta}_m-\boldsymbol{\eta}_r\|\rightarrow0$ as $t\rightarrow\infty$.
\begin{lemma}
	\cite{khalil2002nonlinear}
	consider a nonlinear system:
	\begin{equation}
		\label{E8}
		\dot{\boldsymbol{\xi}}=f(\boldsymbol{\xi},\boldsymbol{\zeta},\boldsymbol{\sigma})
	\end{equation}
	where $\boldsymbol{\sigma}$ is bounded disturbance. A control law $\boldsymbol{\zeta}$ is considered admissible in relation to the system \eqref{E8} if it is capable of making $\boldsymbol{\xi}$ uniformly ultimately bounded, i.e., there exists a continuously differentiable function $V(\boldsymbol{\xi})$ associate with $\boldsymbol{\zeta}$
	\begin{align}
		\psi_1(\|\boldsymbol{\xi}\|_2^2) &\le V(\boldsymbol{\xi}) \le \psi_2(\|\boldsymbol{\xi}\|_2^2), \label{E9} \\
		\dot{V}(\boldsymbol{\xi}) &\le -W(\boldsymbol{\xi}) + \psi_3(\|\boldsymbol{\sigma}\|_2^2), \label{E10} \\
		W(x) &> \psi_3(\|\boldsymbol{\sigma}\|_2^2), \quad \forall \|\boldsymbol{\xi}\|_2^2 > r, \label{E11}
	\end{align}
	for some constant $r > 0$. $\psi_1(\cdot)$, $\psi_2(\cdot)$, $\psi_3(\cdot)$ are class $\mathcal{K}$ functions, $W(\cdot)$
	is a continuous positive definite function.
\end{lemma}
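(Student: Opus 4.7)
The plan is to invoke the classical uniform ultimate boundedness argument of Lyapunov stability theory. The class $\mathcal{K}$ sandwich in (9) is the mechanism that translates between sublevel sets of $V$ and norm balls of $\boldsymbol{\xi}$, while condition (11) guarantees that $V$ decreases strictly outside the ball $B_r = \{\boldsymbol{\xi} : \|\boldsymbol{\xi}\|_2^2 \le r\}$. Together these two ingredients force every trajectory into a compact neighborhood of $B_r$ in finite time and keep it there, which is precisely the UUB property.

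Concretely, I would first pick a level $c \ge \psi_2(r)$ and examine the sublevel set $\Omega_c = \{\boldsymbol{\xi} : V(\boldsymbol{\xi}) \le c\}$. The upper bound in (9) gives $B_r \subset \Omega_c$, and the lower bound together with $\psi_1 \in \mathcal{K}$ ensures $\Omega_c$ is bounded and, by continuity of $V$, compact. The next step is to establish forward invariance of $\Omega_c$: on the boundary $V(\boldsymbol{\xi})=c$, monotonicity of $\psi_2$ yields $\|\boldsymbol{\xi}\|_2^2 > r$, so (11) gives $\dot{V}(\boldsymbol{\xi}) < 0$ and prevents escape. For any initial condition $\boldsymbol{\xi}(0) \notin \Omega_c$, condition (10) combined with a uniform positive lower bound $\kappa$ on $W(\boldsymbol{\xi}) - \psi_3(\|\boldsymbol{\sigma}\|_2^2)$ over the compact annulus $\overline{\Omega_{V(\boldsymbol{\xi}(0))}}\setminus \mathring{\Omega}_c$ produces $\dot{V} \le -\kappa < 0$, so $V$ descends to the level $c$ in a finite time $t^\star \le (V(\boldsymbol{\xi}(0))-c)/\kappa$. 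Applying $\psi_1^{-1}$ to $V(\boldsymbol{\xi}(t)) \le c$ then converts this into the norm estimate $\|\boldsymbol{\xi}(t)\|_2^2 \le \psi_1^{-1}(c)$ for all $t \ge t^\star$.

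The main obstacle I anticipate is the extraction of the uniform lower bound $\kappa$ on $W - \psi_3(\|\boldsymbol{\sigma}\|_2^2)$ over the annulus: this requires continuity of $W$, monotonicity of $\psi_3$, and a standing bound $\|\boldsymbol{\sigma}\|_2 \le \bar{\sigma}$ inherited from the hypothesis that $\boldsymbol{\sigma}$ is bounded, all invoked without committing to specific numerical constants so that the lemma remains model-free. Care is also required with the compactness of $\Omega_c$, which tacitly wants $\psi_1 \in \mathcal{K}_\infty$ or at least that $c$ lies in its range; otherwise one must first restrict attention to initial conditions in the domain of attraction. Once these two technicalities are dispatched, the remainder is routine Lyapunov bookkeeping and does not interact with the specific structure of $f$.
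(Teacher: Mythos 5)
The paper gives no proof of this lemma at all---it is quoted directly from the cited reference (Khalil, \emph{Nonlinear Systems}) as the definition of an admissible control law---so there is nothing to compare against; your argument is the standard textbook UUB proof (sublevel-set invariance, finite-time descent at rate $\kappa$, conversion back to a norm bound via $\psi_1^{-1}$) and is correct in substance. The only bookkeeping point worth tightening is that forward invariance of $\Omega_c$ needs $c>\psi_2(r)$ strictly (with $c=\psi_2(r)$ the boundary argument only yields $\|\boldsymbol{\xi}\|_2^2\ge r$, where \eqref{E11} is silent), and, as you already note, globality requires $\psi_1\in\mathcal{K}_\infty$.
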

\subsection{Controller  Structure}
The overall control sturcture is shown in Fig.\ref{fig:Fig11}. $\boldsymbol{u}$ and $\boldsymbol{\nu}$ consist of 2 parts:
\begin{equation}
	\label{E12}
	\begin{aligned}
		&\boldsymbol{u}=\boldsymbol{u}_b+\boldsymbol{u}_l\\
		&\boldsymbol{\nu}=\boldsymbol{\nu}_b+\boldsymbol{\nu}_l
	\end{aligned}
\end{equation}
where $\boldsymbol{u}_b$ and $\boldsymbol{\nu}_b$ are baseline control laws whose form and parameters are identical to those of the nominal system control law $\boldsymbol{u}_m$ and $\boldsymbol{\nu}_m$. $\boldsymbol{u}_b$ and $\boldsymbol{\nu}_b$ are admissible control of system \eqref{E4} and \eqref{E6} respectively.
$\boldsymbol{u}_l$ and $\boldsymbol{\nu}_l$ are control laws generated by neural networks trained through RL. The training process of 
$\boldsymbol{u}_l$ and $\boldsymbol{\nu}_l$ are described in detail in Section \ref{Sec:sec4}.
\begin{figure}[htp] 
	\centerline{\includegraphics[width=0.83        \linewidth]{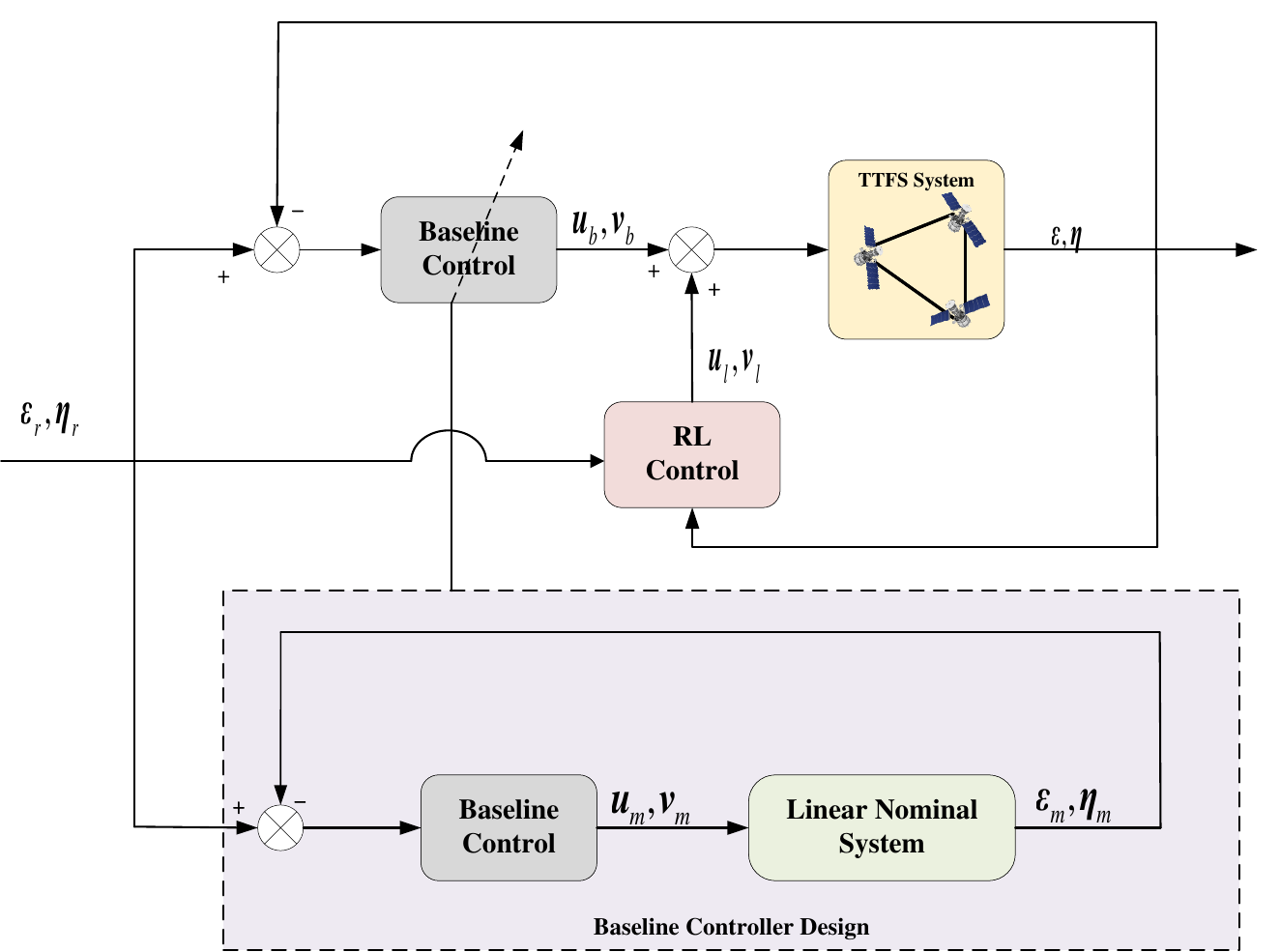}}
	\caption{ Control structure for TTFS}
	\label{fig:Fig11}
\end{figure}
\begin{remark}
	\label{remark1}
	Baseline control laws $\boldsymbol{u}_b$ and $\boldsymbol{\nu}_b$ are employed to ensure the basic tracking performance and can be designed  based on nominal models \eqref{E5} and \eqref{E7} using any existing method. In this paper, we choose PD method for baseline control design.
\end{remark}

\section{Design and Implementation of Model-Reference Reinforcement Learning Algorithm}
\label{Sec:sec4}
\subsection{Markov Decision Process}
In the context of RL, dynamics \eqref{E4}, \eqref{E6} are represented by the mathematical model of a Markov decision process, denoted by the tuple $\mathcal{MDP}:=\langle\mathcal{S,U,P},R,\gamma\rangle$, in which $\mathcal{S}$ denotes the state space, $\mathcal{U}$ is the action space, $\mathcal{P:S\times U\times S}\rightarrow\mathbb{R}$ specifies a transition
probability, $R:\mathcal{S\times U}\rightarrow\mathbb{R}$ denotes a reward function, and $\gamma\in\left[0,1\right)$ is a discount factor. For reel motors' environment(referred to as  tethers' environment hereafter) \eqref{E6}, the observable state vector $\boldsymbol{s}_l\in\mathcal{S}$ is chosen as $\boldsymbol{s}_{l-original}=\{\boldsymbol{\eta}_r,\boldsymbol{\eta},\boldsymbol{e}_l\}$, where $\boldsymbol{e}_l=\boldsymbol{\eta}_r-\boldsymbol{\eta}$. For node satellites' environment \eqref{E4}, the observable state vector $\boldsymbol{s}_{sat-original}\in\mathcal{S}$ is chosen as $\boldsymbol{s}_{sat}=\{\boldsymbol{\varepsilon}_r,\boldsymbol{\varepsilon},\boldsymbol{e}_{sat}\}$, where $\boldsymbol{e}_{sat}=\boldsymbol{\varepsilon}_r-\boldsymbol{\varepsilon}$.

To mitigate training instability due to scale differences between position/tether lengths and velocity/tether reeling speed, the original observable vectors $\boldsymbol{s}_{l-original}$ and $\boldsymbol{s}_{sat-original}$ are individually normalized using the "Min-Max" method \cite{cabello2023impact}:
\begin{equation}
	\label{E13}
	\boldsymbol{s}_l=\frac{\boldsymbol{s}_l-\boldsymbol{s}_l^{min}}{\boldsymbol{s}_l^{max}-\boldsymbol{s}_l^{min}},
	\boldsymbol{s}_{sat}=\frac{\boldsymbol{s}_{sat}-\boldsymbol{s}_{sat}^{min}}{\boldsymbol{s}_{sat}^{max}-\boldsymbol{s}_{sat}^{min}}
\end{equation}
Actions $\boldsymbol{u}_{l}^{norm},\boldsymbol{\nu}_{l}^{norm}$ output by neural networks are constrained to 
$\left[0,1\right]$ and are denormalized via the inverse "Min–Max" transformation before being fed into the environment:
\begin{equation}
	\label{E14}
	\begin{aligned}
		&\boldsymbol{u}_l=\boldsymbol{u}_l^{min}+\boldsymbol{u}_{l}^{norm}\left(\boldsymbol{u}_l^{max}-\boldsymbol{u}_l^{min}\right),\\
		&\boldsymbol{\nu}_{l}=\boldsymbol{\nu}_l^{min}\boldsymbol{\nu}_{l}^{norm}\left(\boldsymbol{\nu}_l^{max}-\boldsymbol{\nu}_l^{min}\right)
	\end{aligned}
\end{equation}
where the superscripts "min" and "max" denote the respective minimum and maximum values of each original vector.

Since RL learns control policies from sampled data, assuming that we can sample discrete-time input and state data from systems \eqref{E4} and \eqref{E6}. Denote the observable state vectors at time step $t$ are $\boldsymbol{s}_{l-original,t}=\{\boldsymbol{\eta}_{r,t},\boldsymbol{\eta}_t,\boldsymbol{e}_{l,t}\}$ and $\boldsymbol{s}_{sat-original,t}=\{\boldsymbol{\varepsilon}_{r,t},\boldsymbol{\varepsilon}_t,\boldsymbol{e}_{sat,t}\}$. The actions at time step $t$ are $\boldsymbol{u}_{l,t}$ and $\boldsymbol{\nu}_{l,t}$. Meanwhile the normalized ones are denoted as $\boldsymbol{s}_{sat,t}$ , $\boldsymbol{s}_{l,t}$, $\boldsymbol{u}_{l,t}^{norm}$, $\boldsymbol{\nu}_{l,t}^{norm}$.

\subsection{Hierarchical Training Framework}
The hierarchical training frame is illustrated in Fig. \ref{fig:Fig12}. Firstly,the normalized control policy $\boldsymbol{\nu}_l^{norm}$ is trained until it converges through tethers' environment. The converged control policy is then embedded into the node satellites' environment together with baseline control laws $\boldsymbol{\nu}_b$ to provide real-time control actions for reel motors, after which the node-satellites–level control policy $\boldsymbol{u}_l^{norm}$ is trained until convergence.
\begin{figure}[htp] 
	\centerline{\includegraphics[width=1      \linewidth]{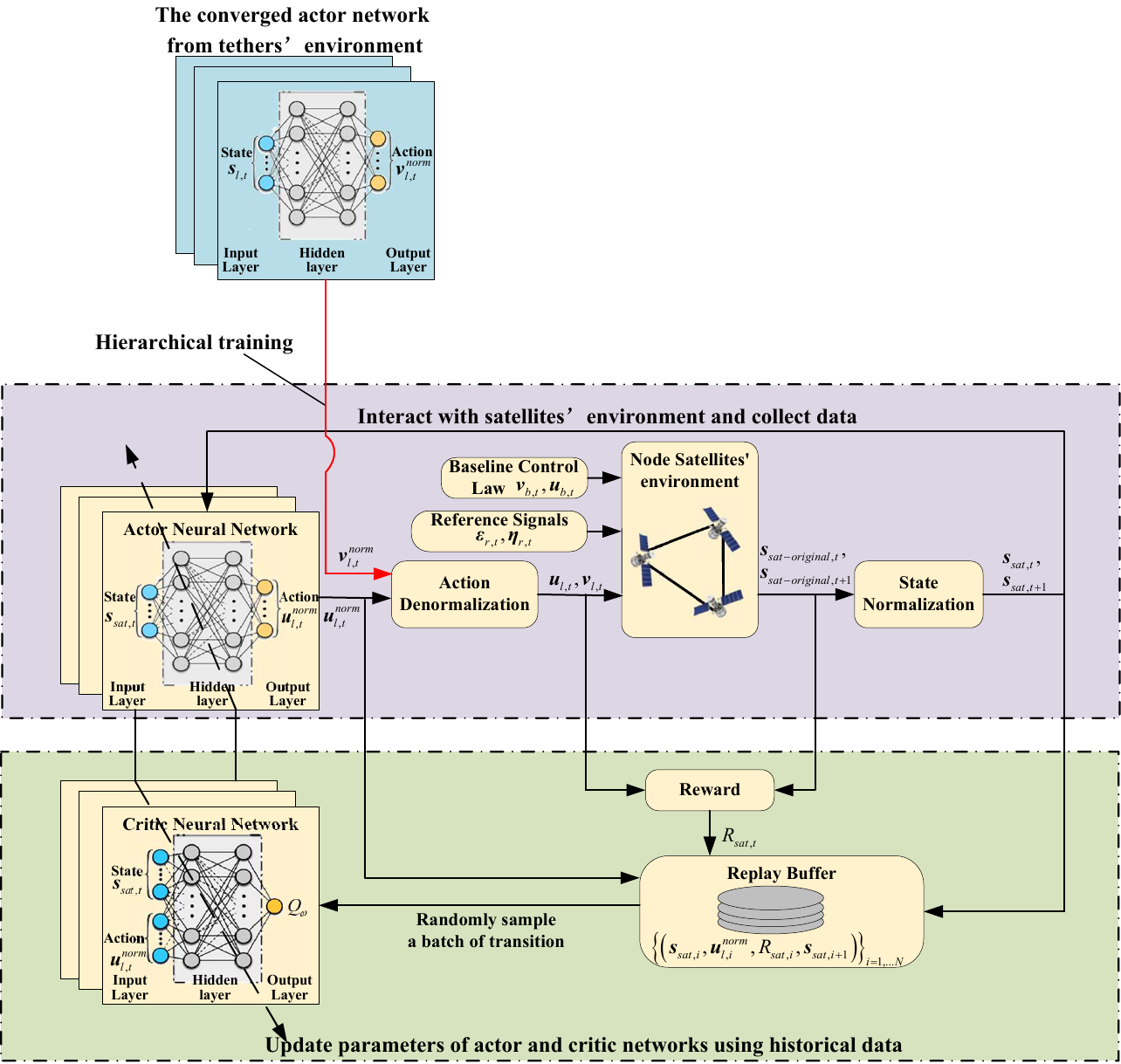}}
	\caption{ Control structure for TTFS}
	\label{fig:Fig12}
\end{figure}

For $\boldsymbol{u}_l$, the control policy of node satellites, the training process consists of two steps: interaction and update. During interaction, firstly, actor networks select an action vector $\boldsymbol{u}_{l,t}^{norm}$ based on current states $\boldsymbol{s}_{sat,t}$. Then agents interact with the environment to obtain rewards $R_{sat,t}$ and states $\boldsymbol{s}_{sat,t+1}$,  at the next timestep $t+1$. The historical data are stored as a transition $\left(\boldsymbol{s}_{sat,t},\boldsymbol{u}_{l,t}^{norm},R_{sat,t},\boldsymbol{s}_{sat,t+1}\right)$ in the replay buffer. In the update process, a batch of transition data is sampled and fed into the critic network. The critic network updates its parameters by minimizing the Bellman error with an entropy term, while the actor network updates its parameters by computing the entropy-regularized action-value function. The interaction and update processes are iterated until  policy converges to the optimal one. The training process of $\boldsymbol{\nu}_l$, the control policy of tether-level follows the same procedure with transition $\left(\boldsymbol{s}_{l,t},\boldsymbol{\nu}_{l,t}^{norm},R_{l,t},\boldsymbol{s}_{l,t+1}\right)$.

\subsection{Environment Specification}
\label{Sec:env}
This section presents the design of the environment, including reward functions, termination conditions, and reset conditions.
\subsubsection{Reward Functions}
\label{Sec:reward}
In RL-based control, dynamic and disturbance constraints are incorporated into the environment model, whereas performance-related constraints are addressed through the design of reward functions. Nonlinear functions (e.g., {\rm Tanh} and power functions) are utilized to constrain the reward range during training, which helps to mitigate the risk of gradient explosion.

According to the configuration tracking constraints of TTFS, the reward function of the tethers' environment is composed of the tether length error and the reeling speed error as follows:
\begin{equation}
	\label{E15}
	R_{l,t}\! =\!\alpha_1\!\!\left(\frac{1}{1\!+\!\|H_1\boldsymbol{e}_{l,t}\|_2}\!-\!1\right)\!+\!\alpha_2{\rm tanh}(-\|H_2\boldsymbol{e}_{l,t}\|_2)
\end{equation}
where $H_1=\left[I_3,0\right]$, $H_2=\left[0,I_3\right]$, $\alpha_1>0$, $\alpha_2>0$.

The reward function of the node satellites' environment comprises three terms: 
\begin{equation}
	\label{E16}
	R_{sat,t}=R_{sat-tracking,t}+R_{sat-energy,t}+R_{sat-dis,t}
\end{equation}
$R_{sat-tracking}$ is derived from position and speed errors, which enforces the system’s configuration tracking constraints:
\begin{equation}
	\label{E17}
	\begin{aligned}
		R_{sat-tracking,t}=&\alpha_3\left(\frac{1}{1+\|H_3\boldsymbol{e}_{sat,t}\|_2}-1\right)\!\\
		+&\alpha_4{\rm tanh}(-\|H_4\boldsymbol{e}_{sat,t}\|_2)
	\end{aligned}
\end{equation}
with $H_3=\left[I_6,0\right]$, $H_4=\left[0,I_6\right]$, $\alpha_3>0$, $\alpha_4>0$.
$R_{sat-energy}$ denotes an energy reward based on thruster control inputs, capturing the system’s energy constraints:
\begin{equation}
	\label{E18}
	\begin{aligned}
		R_{sat-energy,t}=\beta_1\left(\frac{1}{1+\beta_2\|\boldsymbol{u}_t\|_2}-1 \right)
	\end{aligned}
\end{equation}
where $\beta_1>0$, $\beta_2>0$.
The third term $R_{sat-dis,t}$ represents a tension penalty defined by the deviation between the inter-satellite distances and corresponding tether lengths, ensuring tethers maintain slightly tensioned/ slightly relaxed states.
\begin{equation}
	\label{E19}
	R_{\text{sat-dis},t} = 
	\sum_{(i,j)\in \mathcal{N}}
	\frac{1}{1+\beta_3\left(\frac{l_{ij,t}}{\|\boldsymbol{\rho}_{i,t}-\boldsymbol{\rho}_{j,t}\|_2}-1\right)^2}
	-3,
\end{equation}
where $\beta_3>0$, $\mathcal{N}=\left\{\left(1,2\right),\left(2,3\right),\left(1,3\right)\right\}$.
\subsubsection{Episode Termination Conditions}
To guide the exploration behavior of RL and restrict its exploration scope, episode termination conditions are defined as( Eq. \eqref{E20} for the tethers' environment,Eq. \eqref{E21} for the node satellites' environment,):
\begin{equation}
	\label{E20}
	\|H_1\boldsymbol{e}_{l,t}\|_2>\vartheta_1\vee \boldsymbol{\eta}_t=\boldsymbol{\eta}_{final}
\end{equation}
\begin{equation}
	\label{E21}
	\|H_3\boldsymbol{e}_{sat,t}\|_2>\vartheta_2\vee \boldsymbol{\varepsilon}_{t}=\boldsymbol{\varepsilon}_{final}
\end{equation}
where $\boldsymbol{\eta}_{final}$ and $\boldsymbol{\varepsilon}_{final}$ areare the respective final states of $\boldsymbol{\eta}_{r}$ and $\boldsymbol{\varepsilon}_{r}$ 
\subsubsection{Environment Reset Conditions}
At the beginning of each episode, initial states $\boldsymbol{\eta}_0$, $\boldsymbol{\varepsilon}_0$ are defined as arbitrary points along the desired trajectories.To enhance robustness of agents, perturbations are introduced to initial states. Specifically, reset conditions are formulated as:
\begin{equation}
	\label{E22}
	\boldsymbol{\eta}_0=\boldsymbol{\eta}_{r,n_1}+\varrho_1
\end{equation}
\begin{equation}
	\label{E23}
	\boldsymbol{\varepsilon}_0=\boldsymbol{\varepsilon}_{r,n_2}+\varrho_2
\end{equation}
where $\boldsymbol{\eta}_{r,n_1}$, $\boldsymbol{\varepsilon}_{r,n_2}$ denote states on the reference trajectories at randomly selected time instants $n_1$, $n_2$, and $\varrho_1$, $\varrho_2$ represent small perturbation vectors.

\subsection{Theory and Implimentation of Soft Actor-Critic Algorithm}
The control policies of both the tether-level and the node-satellite-level are trained using the Soft Actor–Critic (SAC) algorithm. For the sake of brevity, in this section we use the generic notation to denote the observable state $\boldsymbol{s}_t$, action $\boldsymbol{a}_t$, and reward $R_t$ which correspond to $\left\{\boldsymbol{s}_{sat,t},\boldsymbol{s}_{l,t}\right\}$, $\left\{\boldsymbol{u}_{l,t}^{norm},\boldsymbol{\nu}_{l,t}^{norm}\right\}$, and $\left\{R_{sat,t},R_{l,t}\right\}$ respectively.
\subsubsection{Soft Policy Evaluation and Improvement}
The state-value function is defined  through a Bellman expectation equation:  
\begin{equation}
	\label{E24}
	V_{\pi}\left(\boldsymbol{s}_t\right)=\mathbb{E}_{\pi}\left[R_t+\gamma V\left(\boldsymbol{s}_{t+1}\right)\right]
\end{equation}
which represents the expected return attained when initiating in state $\boldsymbol{s}_t$ and following policy $\pi$.

Unlike conventional actor–critic (AC) algorithms, in SAC algorithm, an entropy regularization term is introduced to balance the exploration–exploitation trade-off in reinforcement learning. The action-value function is defined as:
\begin{equation}
	\label{E25}
	\left\{
	\begin{aligned}
		Q_{\pi}(\boldsymbol{s}_t,\boldsymbol{a}_t) 
		&=R_t\!+\!\gamma \, \mathbb{E}_{\boldsymbol{s}_{t+1}}\!\! \Big[ \!V_{\pi}(\boldsymbol{s}_{t+1}) 
		\!+\! \varpi \mathcal{H}\big(\pi(\cdot|\boldsymbol{s}_{t+1})\big)\!\! \Big] \\
		\mathcal{H}\big(\pi(\cdot|\boldsymbol{s}_{t+1})\big) 
		&\begin{aligned}[t]
			&=\!\! -\!\!\sum_{\boldsymbol{a}_{t+1}}\! \pi(\boldsymbol{a}_{t+1}|\boldsymbol{s}_{t+1}) 
			\log \pi(\boldsymbol{a}_{t+1}|\boldsymbol{s}_{t+1}\!) \\
			&=\!\! -\mathbb{E}_{\boldsymbol{a}_{t+1} \sim \pi(\cdot|\boldsymbol{s}_{t+1})} 
			\big[\!\log \pi(\boldsymbol{a}_{t+1}|\boldsymbol{s}_{t+1})\!\big]
		\end{aligned}
	\end{aligned}
	\right.
\end{equation}
where $\mathcal{H}\big(\pi(\cdot|\boldsymbol{s}_{t+1})\big)$ denotes The policy entropy at state $\boldsymbol{s}_{t+1}$, which quantifies the uncertainty of the policy distribution under state $\boldsymbol{s}_{t+1}$. $\varpi$ is a temperature parameter.

During the update phase of training, the policy evaluation and policy improvement are alternately executed. In policy evaluation, the Q-value is computed by applying a Bellman
operation $Q_{\pi}(\boldsymbol{s}_t,\boldsymbol{a}_t)=\mathcal{T}^{\pi}Q_{\pi}(\boldsymbol{s}_t,\boldsymbol{a}_t)$ \cite{2021Model} where:
\begin{equation}
	\label{E26}
	\begin{aligned}
		\mathcal{T}^{\pi}Q_{\pi}(\boldsymbol{s}_t,\boldsymbol{a}_t)=&R_t\!+\!\gamma\mathbb{E}_{\boldsymbol{s}_{t\!+\!1}}\!\!\left\{\mathbb{E}_{\boldsymbol{a}_{t\!+\!1}\sim\pi(\cdot|\boldsymbol{s}_{t\!+\!1})}\left[Q_{\pi}\!\!\left(\right.\right.\right.\\
		&\left.\left.\left.\!\!\!\!\boldsymbol{s}_{t\!+\!1},\boldsymbol{a}_{t\!+\!1}\right)\!-\!\varpi\log\!\left(\pi\left(\boldsymbol{a}_{t+1}|\boldsymbol{s}_{t+1}\!\right)\right)\right]\right\}
	\end{aligned}
\end{equation}
In policy improvement, the policy is updated through:
\begin{equation}
	\label{E27}
	\pi_{new}=\arg\min_{\pi'\in\Pi}\mathfrak{D}_{KL}\left(\pi'\left(\cdot|\boldsymbol{s}_t\right)\frac{e^{\varpi^{-1}Q^{\pi_{old}}(\boldsymbol{s}_t,\cdot)}}{Z^{\pi_{old}}}\right)
\end{equation}
where $\Pi$ denotes a policy set, $\mathfrak{D}$ repesents the KL divergence, $\pi_{old}$ denotes the policy from last policy improvement, $Q^{\pi_{old}}$ is the Q-value of $\pi_{old}$, and $Z^{\pi_{old}}$ is a normalization factor.

Through multiple iterations the policy eventually converges to the optimal one:
\begin{equation}
	\label{E28}
	\pi^\ast=\arg\min_{\pi\in\Pi}\mathbb{E}_{\pi}\left[\varpi\log\!\left(\pi\left(\boldsymbol{a}_{t}|\boldsymbol{s}_{t}\!\right)\right)-Q_{\pi}\left(\boldsymbol{s}_t,\boldsymbol{a}_t\right)\right]
\end{equation}
\subsubsection{Practical Implementation}
In SAC, both the action-value function and the policy are approximated by deep neural networks. To mitigate the issue of overestimation in Q-values, in addition to two Q-networks $Q_{\omega_1}$ and $Q_{\omega_2}$, two target-Q-networks $Q_{\overline{\omega}_1}$ and $Q_{\overline{\omega}_2}$ are employed and the minimum of their outputs is taken when computing the target. The critic networks are updated by minimizing the following loss function:
\begin{equation}
	\label{E29}
	\begin{aligned}
		L_Q(\omega_{p})\!=\!\mathbb{E}_{(\boldsymbol{s}_t,\boldsymbol{a}_t,R_t,\boldsymbol{s}_{t+1})\sim \mathcal{D}}\!\!&\left[\frac{1}{2}\!\!\left(Q_{\omega_p}(\boldsymbol{s}_t,\boldsymbol{a}_t)\right.\right.\\
		&\left.\left.-Y_{tar}\right)^2\right],p=1,2
	\end{aligned}
\end{equation}
where $\mathcal{D}$ denotes the replay buffer, $\omega_p$ represent parameters of Q-networks. The target value is:
\begin{equation}
	\label{E30}
	Y_{tar}\!=\!R_t\!+\!\gamma\min\left\{Q_{\overline{\omega}_1},Q_{\overline{\omega}_2}\right\}\!-\gamma\alpha\log\pi_{\theta}(\boldsymbol{a}_{t+1}|\boldsymbol{s}_{t+1})
\end{equation}
where $\overline{\omega}_1$, $\overline{\omega}_2$ are parameters of target-Q-networks, $\theta$ is the parameter of actor network.

The loss function of policy $\pi$ is obtained from the KL divergence:
\begin{equation}
	\label{E31}
	\begin{aligned}
		L_{\pi}(\theta)\!=\!\mathbb{E}_{\boldsymbol{s}_t\sim\mathcal{D},\boldsymbol{a}_t\sim\pi_{\theta}}&\Big[\varpi\log\pi_{\theta}(\boldsymbol{a}_{t}|\boldsymbol{s}_{t})\!\\
		&-\!\min_{p=1,2}Q_{\omega_p}(\boldsymbol{s}_{t},\boldsymbol{a}_{t})\Big]
	\end{aligned}
\end{equation}

The entropy term $\mathcal{H}$ serves as a measure of policy stochasticity: it should be larger in states with high uncertainty about the optimal action and smaller in states where the optimal action is nearly deterministic. Accordingly, the loss function for $\varpi$ is formulated as:
\begin{equation}
	\label{E32}
	L(\varpi)=\mathbb{E}_{\boldsymbol{s}_t\sim\mathcal{D},\boldsymbol{a}_t\sim\pi_{\theta}}\left[-\varpi\log\pi(\boldsymbol{a}_t|\boldsymbol{s}_t)-\varpi\mathcal{H}_0\right]
\end{equation}
where $\mathcal{H}_0$ is a target entropy.

The pseudo-code of SAC is shown in Algorithm \ref{alg:SAC}.
\begin{remark}\label{remark2}
	Policy $\pi_{\theta}(\boldsymbol{a}|\boldsymbol{s})$ is modeled as a Gaussian distribution $\mathcal{N}(c_{\theta}(\boldsymbol{s}),\mu_{\theta}(\boldsymbol{s}))$, where $c_{\theta}(\boldsymbol{s})$ and $\mu_{\theta}(\boldsymbol{s})$ are outputs of the critic network. During training, an action $a_t$ is sampled from the policy distribution, and entropy regularization is applied to encourage sufficient exploration. While testing, the agent take the mean value $a_t=c_{\theta}(\boldsymbol{s})$ for deterministic control. Therefore, the controller can guarantee system stability
\end{remark}
\begin{algorithm}[h]
	\caption{Soft Actor-Critic (SAC)}
	\label{alg:SAC}
	\KwIn{Learning rates $\varkappa$, discount factor $\gamma$, temperature $\varpi$, total training time step $K$, replay buffer $\mathcal{D}$}
	\KwOut{Trained policy $\pi_\theta$}
	
	Randomly initialize critic networks $Q_{\omega_1}(\boldsymbol{s},a)$, $Q_{\omega_2}(\boldsymbol{s},a)$ and actor network $\pi_\theta(\boldsymbol{s})$\;
	Initialize target networks $\bar{Q}_{\omega_1} \leftarrow Q_{\omega_1}, \; \bar{Q}_{\omega_2} \leftarrow Q_{\omega_2}$\;
	Initialize replay buffer $\mathcal{D}$\;
	
	\For{each episode}{
		Obtain initial state $\boldsymbol{s}_1$\;
		\For{each time step $t$}{
			Sample action $\boldsymbol{a}_t \sim \pi_\theta(\cdot|\boldsymbol{s}_t)$\;
			Execute $a_t$, receive reward $r_t$, and observe next state $\boldsymbol{s}_{t+1}$\;
			Store transition $(\boldsymbol{s}_t, \boldsymbol{a}_t, R_t, \boldsymbol{s}_{t+1})$ in $\mathcal{D}$\;
			
			\For{each gradient step}{
				Sample minibatch $\{(\boldsymbol{s}_q,\boldsymbol{a}_q,R_q,\boldsymbol{s}_{q+1})\}_{q=1}^N$ from $\mathcal{D}$\;
				Compute target value for each sample:\
				\begin{align*}
					y_q &= R_q + \gamma \bigg(
					\min_{p=1,2} Q_{\bar{\omega}_p}(\boldsymbol{s}_{q+1}, \boldsymbol{a}_{q+1}) \\
					&\quad - \varpi \log \pi_\theta(a_{q+1}|s_{q+1})
					\bigg), \\
					&a_{q+1} \sim \pi_\theta(\cdot|s_{q+1})
				\end{align*}
				Update critic networks by minimizing:\
				\[
				L(\omega_p) \!=\! \frac{1}{N}\sum_{q=1}^N \big( y_q - Q_{\omega_p}(\boldsymbol{s}_q,\boldsymbol{a}_i) \big)^2, p\!=\!1,2
				\]
				Update the actor network by minimizing:\
				\begin{align*}
					L_\pi(\theta) &= \frac{1}{N}\sum_{q=1}^N 
					\Big( \alpha \log \pi_\theta(\tilde{\boldsymbol{a}}_q|s_q)\\
					&- \min_{p=1,2} Q_{\omega_p}(\boldsymbol{s}_q, \tilde{\boldsymbol{a}}_q) \Big),
					\tilde{\boldsymbol{a}}_q \sim \pi_\theta(\cdot|\boldsymbol{s}_i)
				\end{align*}
				Update temperature parameter $\varpi$\;
				Update target networks:\
				\begin{align*}
					\bar{\omega}_p \leftarrow \varkappa \omega_p + (1-\varkappa)\bar{\omega}_p,p=1,2
				\end{align*}
				\textbf{end for}
			}
			\textbf{end for}
		}
		\textbf{end for}
	}
\end{algorithm}

\section{Stability of the Tracking Control}
\label{Sec:sec5}
\subsection{Tether-Level Analysis}
\subsubsection{Baseline Control Law: Proof of Uniform Ultimate Boundedness (UUB)}
The error dynamics equation of \eqref{E6} can be formulated as:
\begin{equation}
	\label{E33}
	\dot{\boldsymbol{e}}_l=\Upsilon_m\boldsymbol{\eta}+\Xi_m\boldsymbol{\nu}+\Delta_1-\dot{\boldsymbol{\eta}}_r
\end{equation}
where $\Upsilon_m=
\left[
\begin{array}{cc}
	0 & I_3 \\
	0 & -\frac{1}{T_m}I_3 
\end{array}
\right]$, $\Xi_m=\left[
\begin{array}{c}
	0 \\
	\frac{D_r}{T_mK_e}I_3
\end{array}
\right]$, $\Delta_1=\left[
\begin{array}{c}
	0 \\
	\frac{D_r}{T_m}I_3
\end{array}
\right]\boldsymbol{\delta}$.
\begin{theorem}
	\label{theo1}
	Under the following condition, the baseline control law $\boldsymbol{\nu}_b$ guarantees that the system state is globally uniformly bounded.
	\begin{enumerate}
		\item The control law is given by
		\begin{equation}\label{E34}
			\boldsymbol{\nu}_b = -K_1\boldsymbol{e}_l
		\end{equation}
		\item There exists a symmetric positive definite matrix $P_1$ such that:
		\begin{equation}\label{E35}
			\begin{aligned}
				&(\!\Upsilon_m\!\!-\!\Xi_m K_1\!)^T\! P_1 \!+\! P_1\!(\Upsilon_m\!-\!\Xi_m K_1) \!=\! -\tilde{Q}_1,\\
				&{\tilde{Q}_1} \succ 0 , \quad {\tilde{Q}_1}^T={\tilde{Q}_1}
			\end{aligned}
		\end{equation}
	\end{enumerate}
\end{theorem}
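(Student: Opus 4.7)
The plan is to invoke Lemma~1 with the quadratic Lyapunov candidate $V_1(\boldsymbol{e}_l)=\boldsymbol{e}_l^{T}P_1\boldsymbol{e}_l$ after reducing the tether error dynamics under $\boldsymbol{\nu}=\boldsymbol{\nu}_b$ to the perturbed linear form
\[
\dot{\boldsymbol{e}}_l=(\Upsilon_m-\Xi_m K_1)\boldsymbol{e}_l+\boldsymbol{\phi}_1,
\]
where $\boldsymbol{\phi}_1$ collects the reference-induced feedforward $\Upsilon_m\boldsymbol{\eta}_r-\dot{\boldsymbol{\eta}}_r$ together with the unmodelled-dynamics term $\Delta_1$. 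First I would substitute $\boldsymbol{\eta}$ in terms of $\boldsymbol{e}_l$ and $\boldsymbol{\eta}_r$ in \eqref{E33} and plug in the baseline law \eqref{E34}, so that condition~(2) of the theorem directly supplies a Hurwitz closed-loop matrix via the assumed solvability of the Lyapunov equation \eqref{E35}.

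Next, differentiating $V_1$ along the closed-loop trajectory and using \eqref{E35} gives
\[
\dot{V}_1=-\boldsymbol{e}_l^{T}\tilde{Q}_1\boldsymbol{e}_l+2\boldsymbol{e}_l^{T}P_1\boldsymbol{\phi}_1.
\]
I would then bound the cross term via Young's inequality, $2\boldsymbol{e}_l^{T}P_1\boldsymbol{\phi}_1\le \epsilon\|\boldsymbol{e}_l\|_2^{2}+\epsilon^{-1}\|P_1\|_2^{2}\|\boldsymbol{\phi}_1\|_2^{2}$, and pick any $0<\epsilon<\lambda_{\min}(\tilde{Q}_1)$ to obtain
\[
\dot{V}_1\le -(\lambda_{\min}(\tilde{Q}_1)-\epsilon)\|\boldsymbol{e}_l\|_2^{2}+\epsilon^{-1}\|P_1\|_2^{2}\|\boldsymbol{\phi}_1\|_2^{2}.
\]

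To conclude, I will identify the Lemma~1 ingredients: $\psi_1(s)=\lambda_{\min}(P_1)s$ and $\psi_2(s)=\lambda_{\max}(P_1)s$ verify \eqref{E9}; $W(\boldsymbol{e}_l)=(\lambda_{\min}(\tilde{Q}_1)-\epsilon)\|\boldsymbol{e}_l\|_2^{2}$ and $\psi_3(s)=\epsilon^{-1}\|P_1\|_2^{2}s$ make \eqref{E10} hold; and \eqref{E11} is satisfied outside a residual ball whose radius scales as $\|P_1\|_2^{2}\|\boldsymbol{\phi}_1\|_{\max}^{2}/\bigl(\epsilon(\lambda_{\min}(\tilde{Q}_1)-\epsilon)\bigr)$. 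Lemma~1 then yields uniform ultimate boundedness of $\boldsymbol{e}_l$, and by the triangle inequality the same holds for $\boldsymbol{\eta}$ since $\boldsymbol{\eta}_r$ is bounded by construction.

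The main obstacle is not the algebra but ensuring that the forcing signal $\boldsymbol{\phi}_1$ is rigorously bounded: one must assume a smooth, uniformly bounded reference $\boldsymbol{\eta}_r$ with bounded $\dot{\boldsymbol{\eta}}_r$, and a uniformly bounded unmodelled-dynamics vector $\boldsymbol{\delta}$ (an implicit but standard regularity hypothesis behind the nominal model \eqref{E7}). A subsidiary bookkeeping subtlety is that the sign of $\Upsilon_m\boldsymbol{\eta}$ in \eqref{E33}, combined with $\boldsymbol{e}_l=\boldsymbol{\eta}_r-\boldsymbol{\eta}$, must be reconciled with the $(\Upsilon_m-\Xi_m K_1)$ matrix appearing in \eqref{E35}, so I would carefully re-derive \eqref{E33} before invoking the Lyapunov equation to ensure the sign conventions match throughout.
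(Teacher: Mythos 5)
Your proposal follows essentially the same route as the paper's proof: the same reduction to the perturbed linear error dynamics $\dot{\boldsymbol{e}}_l=(\Upsilon_m-\Xi_m K_1)\boldsymbol{e}_l+\tilde{\Delta}_1$, the same quadratic Lyapunov candidate $V_1=\boldsymbol{e}_l^{T}P_1\boldsymbol{e}_l$, and the same use of the Lyapunov equation \eqref{E35}. The only difference is that you bound the cross term with Young's inequality (producing a quadratic-plus-constant residual that matches the $\psi_3(\|\cdot\|_2^2)$ form of Lemma~1 more literally), whereas the paper applies Cauchy--Schwarz to obtain a term linear in $\|\boldsymbol{e}_l\|_2$ and reads off the ultimate bound $c_1=2\lambda_{\max}(P_1)\bar{\Delta}_1/\lambda_{\min}(\tilde{Q}_1)$ directly; both versions are correct and both rest on the boundedness of $\tilde{\Delta}_1$ that you rightly flag as an implicit hypothesis.
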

\begin{proof}
	Substituting \eqref{E34} into \eqref{E33} the error dynamics can be written as:
	\begin{equation}
		\label{E36}
		\begin{aligned}
			&\dot{\boldsymbol{e}}_l = (\Upsilon_m-\Xi_m K_1){\boldsymbol{e}}_l + \tilde{\Delta}_1\\
			&\tilde{\Delta}_1=\Upsilon_m\boldsymbol{\eta}_r-\dot{\boldsymbol{\eta}}_r+\Delta_1
		\end{aligned}
	\end{equation}
	Consider the Lyapunov candidate function
	\begin{equation}
		\label{E37}
		V_1(\boldsymbol{e}_l) = \boldsymbol{e}_l^T P_1 \boldsymbol{e}_l , \quad P_1 \succ 0 .
	\end{equation}
	The time derivative of $V(\boldsymbol{e}_l)$ along the trajectories is
	\begin{equation}
		\label{E38}
		\begin{aligned}
			\dot{V}_1(\boldsymbol{e}_l) =& \boldsymbol{e}_l^T \big( (\Upsilon_m\!-\!\Xi_m K_1)^T P_1 \\
			&+P_1(\Upsilon_m\!-\!\Xi_m K_1) \big) \boldsymbol{e}_l + 2 \boldsymbol{e}_l^T P_1 \tilde{\Delta}_1 \\
			=& - \boldsymbol{e}_l^T \tilde{Q}_1 \boldsymbol{e}_l + 2 \boldsymbol{e}_l^T P_1 \tilde{\Delta}_1
		\end{aligned}
	\end{equation}
	Since $P_1$ and $\tilde{Q}_1$ are symmetric positive definite matrixs, we have:
	\begin{equation}
		\label{E39}
		\begin{aligned}
			&\lambda_{min}(\tilde{Q}_1)\| \boldsymbol{e}_l \|_2^2\leq\boldsymbol{e}_l^T\tilde{Q}_1\boldsymbol{e}_l\leq \lambda_{max}(\tilde{Q}_1)\| \boldsymbol{e}_l \|_2^2\\
			& \|P_1\|_2=\lambda_{max}(P_1)
		\end{aligned}
	\end{equation}
	Applying the Cauchy--Schwarz inequality yields:
	\begin{equation}
		\label{E40}
		\dot{V}(\boldsymbol{e}_l) \leq -\lambda_{\min}(\tilde{Q}_1) \|\boldsymbol{e}_l\|_2^2 + 2\lambda_{max}(P_1)\bar{\Delta}_1\|\boldsymbol{e}_l\| _2
	\end{equation}
	
	Introduce $c_1 = \frac{2\lambda_{max}(P_1)\bar{\Delta}_1}{\lambda_{min}(\tilde{Q}_1)}$. When $\|\boldsymbol{e}_l\|_2\geq c_1$, $\dot{V}_1(\boldsymbol{e}_l)<0$ . 
	
	Therefore, the error system is uniformly ultimately bounded with ultimate bound to $c_1$.
\end{proof}
\subsubsection{RL-based Control Law: Enhancement over the Baseline Control}
Since the baseline control $\boldsymbol{\nu}_b$ is admissible, we start the RL training process using this baseline controller.
\begin{theorem}\label{theo2}
	The overall control law $\boldsymbol{\nu}^h = \boldsymbol{\nu}_b +\boldsymbol{\nu}_l^h$
	can always stabilize the reel motor system \eqref{E6}, where $\boldsymbol{\nu}_l^h$ represents RL-based Control Law from the h-th iteration, and $h=1,2,3,\ldots\infty$
\end{theorem}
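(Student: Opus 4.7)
The plan is to leverage the UUB result from Theorem \ref{theo1} and to treat the RL correction $\boldsymbol{\nu}_l^h$ as a bounded additive perturbation that enters the closed-loop error equation through the same input channel $\Xi_m$ as the baseline. Substituting the combined control law \eqref{E12} into \eqref{E33} gives
\begin{equation*}
\dot{\boldsymbol{e}}_l = (\Upsilon_m - \Xi_m K_1)\boldsymbol{e}_l + \Xi_m \boldsymbol{\nu}_l^h + \tilde{\Delta}_1,
\end{equation*}
so the unforced part is exactly the Hurwitz matrix already analyzed in the proof of Theorem \ref{theo1}; only the forcing term is new.

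First I would establish an $h$-independent bound on $\boldsymbol{\nu}_l^h$. This is an architectural rather than a learning fact: by Remark \ref{remark2} the actor output is squashed into $[0,1]^3$ and then mapped via the affine denormalization \eqref{E14} onto $[\boldsymbol{\nu}_l^{\min}, \boldsymbol{\nu}_l^{\max}]$, so for every network parameter $\theta$ and every iteration $h$ we have $\|\boldsymbol{\nu}_l^h\|_2 \le \bar{\nu}$ for some constant $\bar{\nu}$ independent of $h$. Combined with the bound $\bar{\Delta}_1$ on $\tilde{\Delta}_1$ used in Theorem \ref{theo1}, the lumped perturbation $\Delta_2^h := \Xi_m \boldsymbol{\nu}_l^h + \tilde{\Delta}_1$ is uniformly bounded by $\bar{\Delta}_2 := \|\Xi_m\|_2 \bar{\nu} + \bar{\Delta}_1$.

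Next I would reuse the Lyapunov candidate $V_1(\boldsymbol{e}_l) = \boldsymbol{e}_l^T P_1 \boldsymbol{e}_l$ with the same $P_1$ satisfying \eqref{E35}, and repeat the estimate leading to \eqref{E40} with $\bar{\Delta}_1$ replaced by $\bar{\Delta}_2$, obtaining
\begin{equation*}
\dot{V}_1(\boldsymbol{e}_l) \le -\lambda_{\min}(\tilde{Q}_1)\|\boldsymbol{e}_l\|_2^2 + 2\lambda_{\max}(P_1)\bar{\Delta}_2 \|\boldsymbol{e}_l\|_2.
\end{equation*}
Invoking Lemma 1 with $\psi_1(s) = \lambda_{\min}(P_1)s$, $\psi_2(s) = \lambda_{\max}(P_1)s$, $W(\boldsymbol{e}_l) = \lambda_{\min}(\tilde{Q}_1)\|\boldsymbol{e}_l\|_2^2$ and the obvious $\psi_3$ yields UUB of $\boldsymbol{e}_l$ with ultimate bound $c_2 = 2\lambda_{\max}(P_1)\bar{\Delta}_2 / \lambda_{\min}(\tilde{Q}_1)$, and this conclusion holds for every $h = 1,2,\dots$ because neither $P_1$, $\tilde{Q}_1$, $\bar{\nu}$ nor $\bar{\Delta}_1$ depends on the iteration.

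The main obstacle, and the reason the statement is not automatic, is that $\boldsymbol{\nu}_l^h$ is a stochastic, non-stationary signal during training, so without further structure it could in principle destabilize the closed loop; the saturation supplied by \eqref{E14} is precisely what keeps the perturbation magnitude uniform in $h$. A secondary subtlety worth flagging is that the ultimate bound $c_2$ is, a priori, \emph{larger} than the baseline's $c_1$; proving that $c_2$ eventually shrinks below $c_1$ as SAC iterates requires reasoning about the policy-improvement monotonicity \eqref{E27} together with the tracking-error terms in the reward \eqref{E15}, and I would handle that claim qualitatively in a remark rather than weave it into the stability argument, since Theorem \ref{theo2} only asserts stabilization, not improvement.
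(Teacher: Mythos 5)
Your proof is correct, but it takes a genuinely different route from the paper's. You treat $\boldsymbol{\nu}_l^h$ purely as a bounded exogenous perturbation---bounded because the actor output is saturated to $[0,1]$ and then affinely denormalized via \eqref{E14}---and rerun the quadratic Lyapunov estimate of Theorem \ref{theo1} with an inflated disturbance bound $\bar{\Delta}_2=\|\Xi_m\|_2\bar{\nu}+\bar{\Delta}_1$. The paper instead makes the learning process itself carry the stability argument: it takes $V_1^0=-Q^0(\boldsymbol{s}_l,\boldsymbol{\nu}_l^0)$ as the Lyapunov candidate (well-defined and nonnegative because the reward \eqref{E15} is non-positive and bounded), uses the Bellman recursion \eqref{E43} together with the admissibility of $\boldsymbol{\nu}_b$ from Lemma 1 to get UUB for the initial policy, and then argues inductively over policy-improvement steps, using the monotonicity $R_{l,t}^{1}\geq R_{l,t}^{0}$ to show the dissipation inequality \eqref{E49}--\eqref{E50} is preserved at every iteration $h$. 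Your argument buys simplicity and robustness: it needs no assumption about the reward sign, no Bellman bookkeeping, and it holds for \emph{any} bounded compensator regardless of whether the policy actually improves---which arguably makes it the more airtight proof of the literal claim ``stabilize.'' What it gives up is exactly what you flag at the end: your ultimate bound $c_2$ is a priori worse than the baseline's $c_1$, so the result says nothing about why adding the RL term is worthwhile; the paper's value-function route is designed to couple stability to policy improvement so that the bound does not degrade across iterations. Since Theorem \ref{theo2} asserts only stabilization, your proof establishes the stated result, and relegating the improvement claim to a separate remark is a reasonable---indeed cleaner---division of labor.
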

\begin{proof}
	The initial RL control and  the associated action-value function are denoted by $\boldsymbol{\nu}^0_l$ and $Q^0(\boldsymbol{s}_{l},\boldsymbol{\nu}^0_l)$. 
	Since the system dynamics are deterministic and no exploration noise is considered, 
	the soft $Q$-function reduces to the standard discounted return:
	\begin{equation}
		\label{E41}
		Q^0(\boldsymbol{s}_t,\boldsymbol{\nu}^0_{l,t}) = R^0_{l,t} + \gamma Q^0(\boldsymbol{s}_{t+1},\boldsymbol{\nu}^0_{l,t+1})
	\end{equation}
	
	According to the reward functions design in Section \uppercase\expandafter{\romannumeral4}-C-\ref{Sec:reward},  the reward is non-positive and bounded, i.e. $R_{l,t} \leq 0$ and $R_{l,t} \in [R_{l}^{min},0]$. 
	Thus $Q^0 \leq 0$ and bounded. 
	
	Define the Lyapunov candidate:
	\begin{equation}\label{E42}
		V_1^0(\boldsymbol{e}_l) = -Q^0(\boldsymbol{s}_{l},\boldsymbol{\nu}^0_l) \geq 0,
	\end{equation}
	Using the Bellman relation, we obtain:
	\begin{equation}\label{E43}
		V_1^0(\boldsymbol{e}_{l,t}) = -R^0_{l,t} + \gamma V_1^0(\boldsymbol{e}_{l,t+1}).
	\end{equation}
	
	From LEMMA~1, the baseline control $\boldsymbol{\nu}_b$ is admissible, which guarantees that there exists a continuously differentiable function $V(\cdot)$ such that:
	\begin{equation}
		\label{E44}
		V(\boldsymbol{e}_{l,t+1}) - V(\boldsymbol{e}_{l,t}) \leq -W(\boldsymbol{e}_{l,t}) + \psi_3(\|\tilde{\Delta}_1(t)\|_2^2)
	\end{equation}
	with $W(\boldsymbol{e}_{l,t})$ positive definite and dominating the uncertainty term when $\|\boldsymbol{e}_{l,t}\|$ is large. 
	Comparing with \eqref{E43}, one sees that:
	\begin{equation}
		\label{E45}
		(1\!-\!\gamma)V_1^0(\boldsymbol{e}_{l,t+1})\!+\!R_{l,t}^0\leq -W(\boldsymbol{e}_{l,t}) \!+\! \psi_3(\|\tilde{\Delta}_1(t)\|_2^2)
	\end{equation}
	This inequality implies that the change in the Lyapunov function is bounded by the system uncertainty. Thus, the system under the initial control law $\boldsymbol{\nu}^0 = \boldsymbol{\nu}_b + \boldsymbol{\nu}_l^0$ is UUB.
	
	Next, consider one step of policy improvement. 
	The new control $\boldsymbol{\nu}_l^1$ is obtained by minimizing:
	\begin{equation}\label{E46}
		\boldsymbol{\nu}_l^1 = \arg\min_{\boldsymbol{\nu}_l}\bigl\{-R^0_{l,t} + \gamma V_1^0(\boldsymbol{e}_{l,t+1 })\bigr\}
	\end{equation}
	In policy evaluation the value function updates as:
	\begin{equation}\label{E47}
		V_1^1(\boldsymbol{e}_{l,t+1}) = -R^1_{l,t} + \gamma V_1^0(\boldsymbol{e}_{l,t+1})
	\end{equation}
	and finally converge at:
	\begin{equation}\label{E48}
		V_1^1(\boldsymbol{e}_{l,t}) = -R^1_{l,t} + \gamma V_1^1(\boldsymbol{e}_{l,t+1})
	\end{equation}
	Since $V_1^1(\boldsymbol{e}_{l,t})\leq V_1^0(\boldsymbol{e}_{l,t})$, for $\boldsymbol{\nu}_l^1$, there exsists:
	\begin{equation}\label{E49}
		\begin{aligned}
			V_1^{1}(\boldsymbol{e}_{l,t+1}) - V^{1}_1(\boldsymbol{e}_{l,t}) \!=&V_1^{1}(\boldsymbol{e}_{l,t+1})+\!R^1_{l,t}\\ 
			&-\gamma V_1^{1}(\boldsymbol{e}_{l,t+1})\\
			\leq&-\!W(\boldsymbol{e}_{l,t})\! +\! \psi_3(\|\bar{\Delta}_1(t)\|_2)\\
			&- (R_{l,t}^0 - R_{l,t}^{1})
		\end{aligned}
	\end{equation}
	Since $R_{l,t}^{1} \geq R_{l,t}^0$ (due to policy improvement maximizing cumulative reward), the term $-(R_{l,t}^0 - R_{l,t}^{1})$ is non-positive. Furthermore, both $R_{l,t}^0$ and $R_{l,t}^{1}$ are bounded. Therefore, there exists a new constant $\bar\Delta_1^{1} > 0$ such that:
	\begin{equation}\label{E50}
		W(\boldsymbol{e}_{l,t}) > \psi_3(\|\Delta(t)\|_2) - (R_t^0 - R_t^{1}),  \forall \|\boldsymbol{e}_{l,t}\|_2 > \bar\Delta_1^{1}
	\end{equation}
	This ensures that $V_1^{1}(\boldsymbol{e}_{l,t+1}) - V^{1}_1(\boldsymbol{e}_{l,t}) < 0$ outside a bounded set, confirming that the system under the improved policy $\boldsymbol{\nu}^1 = \boldsymbol{\nu}_b +\boldsymbol{\nu}_l^1$ is also UUB.
	
	Repeating the same analysis for each iteration $h$ , it is poven inductively that every $\boldsymbol{\nu}^h$ stabilizes the system. 
	Thus, the overall control law $\boldsymbol{\nu}^h = \boldsymbol{\nu}_b +\boldsymbol{\nu}_l^h$ guarantees UUB stability for all iterations. 
\end{proof}
\subsection{Satellite-Level Analysis}
\subsubsection{Boundedness Analysis of Tension}
Denote the reference position and the position error of the $i$th satellite as: $\boldsymbol{r}_i^{ref}$ and $\Delta{\boldsymbol{r}}_i$ respectively.
\begin{equation}
	\label{E51}
	\begin{aligned}
		\|\boldsymbol{r}_i-\boldsymbol{r}_j\|_2=&\|\boldsymbol{r}_{i}^{ref}-\boldsymbol{r}^{ref}_j+\Delta{\boldsymbol{r}_{i}}-\Delta{\boldsymbol{r}_{j}}\|\\
		\leq&\|\boldsymbol{r}_{i}^{ref}-\boldsymbol{r}^{ref}_j\|_2+\|\Delta{\boldsymbol{r}_{i}}-\Delta{\boldsymbol{r}_{j}}\|_2
	\end{aligned}
\end{equation}
Denote length error as $\Delta_{ij}$. After training in tether-level, THEOREM \ref{theo2} holds, thus:
\begin{equation}
	\label{E52}
	\|\boldsymbol{r}_{i}^{ref}-\boldsymbol{r}^{ref}_j\|_2=l_{ij}+\Delta{ij}
\end{equation}
Substituting Eq. \ref{E52} in Eq. \ref{E51}:
\begin{equation}\label{E53}
	\|\boldsymbol{r}_i-\boldsymbol{r}_j\|_2-l_{ij}\leq \|\Delta{\boldsymbol{r}_{i}}-\Delta{\boldsymbol{r}_{j}}\|_2+\Delta_{ij}
\end{equation}
Therefore, tension terms satisfy:
\begin{equation}\label{E54}
	\begin{aligned}
		\|\boldsymbol{T}_{ij}\|_2&\leq \frac{K_T}{l_{ij}}(\|\Delta{\boldsymbol{r}_{i}}-\Delta{\boldsymbol{r}_{j}}\|_2+\Delta_{ij})\\
		&\leq c_2\|\boldsymbol{e}_{sat}\|_2+c_3
	\end{aligned}
\end{equation}
where $c_2=2K_T/l^{min}$, $c_3=K_T\|\Delta{ij}\|_2$.
\subsubsection{Baseline Control Law: Proof of Uniform Ultimate Boundedness (UUB)}
The error dynamics \eqref{E4} can be formulated as:
\begin{equation}
	\label{E55}
	\dot{\boldsymbol{e}}_{sat}=\tilde{\Upsilon}_m\boldsymbol{\varepsilon}+\tilde{\Xi}_m\boldsymbol{u}+\tilde{\Delta}_2-\dot{\boldsymbol{\varepsilon}}_r
\end{equation}
where $\tilde{\Upsilon}_m=
\left[
\begin{array}{cc}
	0 & I_6 \\
	A_m & I_3 \otimes C
\end{array}
\right]$, $\tilde{\Xi}_m=\left[
\begin{array}{c}
	0 \\
	I_6
\end{array}
\right]$, $\tilde{\Delta}_2=\left[
\begin{array}{c}
	0 \\
	\boldsymbol{G}(\boldsymbol{r}_{sat}) \!+\! \boldsymbol{D}
\end{array}
\right]$.
\begin{theorem}
	\label{theo3}
	Under the following condition, the baseline control law $\boldsymbol{u}_b$ guarantees that the system state is globally uniformly bounded (UUB).
	\begin{enumerate}
		\item The control law is given by
		\begin{equation}\label{E56}
			\boldsymbol{u}_b = -K_2\boldsymbol{e}_{sat}
		\end{equation}
		\item There exists a symmetric positive definite matrix $P_2$ such that:
		\begin{equation}\label{E57}
			\begin{aligned}
				&(\!\tilde{\Upsilon}_m\!-\!\Xi_m K_2\!)^T\! P_2 \!+\! P_2(\tilde{\Upsilon}_m\!-\!\Xi_m K_2\!)\! =\! -\tilde{Q}_2,\\
				&2c_2\lambda_{max}{P_2}-\lambda_{\min}(\tilde{Q}_2)>0\\
				&{\tilde{Q}_2} \succ 0 , \quad {\tilde{Q}_2}^T={\tilde{Q}_2}
			\end{aligned}
		\end{equation}
	\end{enumerate}
\end{theorem}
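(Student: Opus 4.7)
The plan is to mirror the Lyapunov-based argument used in the proof of Theorem \ref{theo1}, with one essential new step that absorbs the state-dependent tension perturbation via the boundedness result (\ref{E54}). First, I would substitute the baseline control $\boldsymbol{u}_b=-K_2\boldsymbol{e}_{sat}$ into the closed-loop version of (\ref{E55}) obtained from the actual dynamics (\ref{E4}), which yields
\begin{equation}
\dot{\boldsymbol{e}}_{sat} = (\tilde{\Upsilon}_m-\tilde{\Xi}_m K_2)\boldsymbol{e}_{sat} + \bar{\Delta}_2,
\end{equation}
where $\bar{\Delta}_2$ aggregates the reference mismatch $\tilde{\Upsilon}_m\boldsymbol{\varepsilon}_r-\dot{\boldsymbol{\varepsilon}}_r$, the gravity/disturbance contribution $\tilde{\Delta}_2$, and the tension term $\tilde{\Xi}_m\boldsymbol{T}$. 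Next, I would take the quadratic Lyapunov candidate $V_2(\boldsymbol{e}_{sat})=\boldsymbol{e}_{sat}^T P_2 \boldsymbol{e}_{sat}$ and, using the Lyapunov equation from condition~(2), reduce its derivative to $\dot{V}_2=-\boldsymbol{e}_{sat}^T\tilde{Q}_2\boldsymbol{e}_{sat}+2\boldsymbol{e}_{sat}^T P_2\bar{\Delta}_2$, so that the remaining task is to bound the indefinite cross term.

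The crux of the argument is to split $\bar{\Delta}_2$ into a state-dependent tension part and a uniformly bounded remainder. For the former, I would invoke (\ref{E54}) in the form $\|\boldsymbol{T}\|_2\leq c_2\|\boldsymbol{e}_{sat}\|_2+c_3$; for the latter, boundedness of $\boldsymbol{G}(\boldsymbol{r}_{sat})$ on the operating region, boundedness of the external disturbance $\boldsymbol{D}$, and smoothness of the reference trajectory collectively furnish a uniform upper bound $\bar{\Delta}_2^{0}$. Applying Cauchy--Schwarz together with $\|P_2\|_2=\lambda_{\max}(P_2)$ and the Rayleigh inequality $\lambda_{\min}(\tilde{Q}_2)\|\boldsymbol{e}_{sat}\|_2^2\leq\boldsymbol{e}_{sat}^T\tilde{Q}_2\boldsymbol{e}_{sat}$ then gives
\begin{equation}
\dot{V}_2 \leq -\bigl(\lambda_{\min}(\tilde{Q}_2)-2c_2\lambda_{\max}(P_2)\bigr)\|\boldsymbol{e}_{sat}\|_2^2 + 2\lambda_{\max}(P_2)\bar{\Delta}_2^{0}\|\boldsymbol{e}_{sat}\|_2.
\end{equation}
Invoking the spectral inequality in (\ref{E57}) in the direction that makes the stabilizing quadratic dominate, I would set $c_5=2\lambda_{\max}(P_2)\bar{\Delta}_2^{0}/\bigl(\lambda_{\min}(\tilde{Q}_2)-2c_2\lambda_{\max}(P_2)\bigr)$ and conclude $\dot{V}_2<0$ whenever $\|\boldsymbol{e}_{sat}\|_2>c_5$, so that Lemma~1 delivers UUB with ultimate bound $c_5$.

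The principal obstacle, and the reason an additional spectral inequality beyond the Lyapunov equation is required here (whereas Theorem \ref{theo1} needed none), is the state-dependent character of the tension perturbation: unlike the unmodeled $\boldsymbol{\delta}$ at the tether level, tension acting on the satellites grows linearly with the tracking error and therefore contributes a genuinely $\mathcal{O}(\|\boldsymbol{e}_{sat}\|_2^2)$ term that must be strictly dominated by the Lyapunov decay rate $\lambda_{\min}(\tilde{Q}_2)$. Enforcing this strict dominance is precisely the role of the second line in condition~(2); verifying that it can be simultaneously satisfied with the Lyapunov equation (for instance by choosing $K_2$ to make $\tilde{Q}_2$ sufficiently large relative to $P_2$) is the only nontrivial constructive check the analyst must perform.
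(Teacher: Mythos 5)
Your proposal matches the paper's proof essentially step for step: the same quadratic Lyapunov candidate $V_2=\boldsymbol{e}_{sat}^TP_2\boldsymbol{e}_{sat}$, the same absorption of the state-dependent tension bound (\ref{E54}) into a $2c_2\lambda_{\max}(P_2)\|\boldsymbol{e}_{sat}\|_2^2$ term alongside the $-\lambda_{\min}(\tilde{Q}_2)\|\boldsymbol{e}_{sat}\|_2^2$ decay, and the same ultimate bound $2\lambda_{\max}(P_2)\bar{\Delta}_2/\bigl(\lambda_{\min}(\tilde{Q}_2)-2c_2\lambda_{\max}(P_2)\bigr)$. You also correctly read the spectral condition in the stabilizing direction $\lambda_{\min}(\tilde{Q}_2)-2c_2\lambda_{\max}(P_2)>0$, which is the direction the paper's own proof actually invokes even though the inequality as printed in (\ref{E57}) has the opposite sign.
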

\begin{proof}
	Considier the Lyapnov candidate function:
	\begin{equation}\label{E58}
		V_2(\boldsymbol{e}_{sat})=\boldsymbol{e}_{sat}^TP_2\boldsymbol{e}_{sat}
	\end{equation}
	The derivative of $V_2$ can be obtained using the same technique as the differentiation of $V_1$:
	\begin{equation}\label{E59}
		\begin{aligned}
			\dot{V}_2\leq& -\lambda_{\min}(\tilde{Q}_2)\|\boldsymbol{e}_{sat}\|_2^2+2\lambda_{\max}(P_2)\bar{\Delta}_2\|\boldsymbol{e}_{sat}\|_2\\
			&+2c_2\lambda_{max}(P_2)\|\boldsymbol{e}_{sat}\|_2^2
		\end{aligned}
	\end{equation}
	Introduce $c_3 = \frac{2\lambda_{max}(P_2)\bar{\Delta}_2}{\lambda_{min}(\tilde{Q}_2)-2c_2\lambda_{max}(P_2)}$. When $\lambda_{min}(\tilde{Q}_2)-2c_2\lambda_{max}(P_2)>0,\|\boldsymbol{e}_{sat}\|_2\geq c_3$, $\dot{V}_2(\boldsymbol{e}_{sat})<0$ .
	
	Therefore, the error system is uniformly ultimately bounded with ultimate bound to $c_3$.
\end{proof}
\subsubsection{RL-based Control Law: Enhancement over the Baseline Control}
\begin{theorem}\label{theo4}
	The overall control law $\boldsymbol{u}^{\tilde{h}} = \boldsymbol{u}_b +\boldsymbol{u}_l^{\tilde{h}}$
	can always stabilize system \eqref{E4}, where $\boldsymbol{u}_l^{\tilde{h}}$ represents RL-based Control Law from the $\tilde{h}$-th iteration, and $\tilde{h}=1,2,3,\ldots\infty$
\end{theorem}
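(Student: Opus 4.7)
The plan is to mirror the inductive argument used in the proof of Theorem \ref{theo2}, while accounting for the two new complications at the satellite level: the nonlinear tension coupling term $\boldsymbol{T}$ in \eqref{E4}, and the fact that the tether-level policy has already been trained and embedded in the environment. The key enabler is that by Theorem \ref{theo2}, the tether-level closed loop is UUB, so the bound \eqref{E54} on $\|\boldsymbol{T}_{ij}\|_2$ from the tension-boundedness analysis is valid and can be absorbed, together with $\tilde{\Delta}_2$, into an effective bounded disturbance acting on the error dynamics \eqref{E55}.

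First I would fix notation for one iteration: let $\boldsymbol{u}^{\tilde{h}}_l$ denote the RL compensator at iteration $\tilde h$, let $Q^{\tilde{h}}(\boldsymbol{s}_{sat},\boldsymbol{u}^{\tilde{h}}_l)$ be the corresponding soft action-value (which collapses to a discounted return since the deployment dynamics are deterministic and test-time exploration noise vanishes by Remark \ref{remark2}), and let $R^{\tilde{h}}_{sat,t}$ be the per-step reward \eqref{E16}. By construction of \eqref{E16}--\eqref{E19}, every reward component lies in $[\,R^{\min}_{sat},0\,]$, so $Q^{\tilde{h}}\leq 0$ is uniformly bounded. I would then take the Lyapunov candidate
\begin{equation}
    V_2^{\tilde{h}}(\boldsymbol{e}_{sat}) \;=\; -Q^{\tilde{h}}(\boldsymbol{s}_{sat},\boldsymbol{u}^{\tilde{h}}_l)\;\geq 0,
\end{equation}
and use the Bellman recursion to get the one-step difference $V_2^{\tilde{h}}(\boldsymbol{e}_{sat,t+1}) - V_2^{\tilde{h}}(\boldsymbol{e}_{sat,t}) = -R^{\tilde{h}}_{sat,t}-(1-\gamma)V_2^{\tilde{h}}(\boldsymbol{e}_{sat,t+1})$, exactly as in \eqref{E43}--\eqref{E45}.

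Next, for the base case $\tilde h = 0$, I would invoke Theorem \ref{theo3}: the admissibility of $\boldsymbol{u}_b$ provides a continuously differentiable $V(\cdot)$ satisfying a Lemma~1-style inequality $V(\boldsymbol{e}_{sat,t+1}) - V(\boldsymbol{e}_{sat,t}) \leq -W(\boldsymbol{e}_{sat,t}) + \psi_3(\|\tilde{\Delta}_2(t)\|_2^2)$, where the dominance of $W$ over $\psi_3$ at large $\|\boldsymbol{e}_{sat}\|_2$ has already been secured by the strict condition $\lambda_{\min}(\tilde Q_2) - 2c_2\lambda_{\max}(P_2) > 0$ that absorbs the tension bound \eqref{E54}. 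Comparing this with the Bellman expression for $V_2^0$ yields UUB of the closed loop under $\boldsymbol{u}^0 = \boldsymbol{u}_b + \boldsymbol{u}^0_l$. For the inductive step, the SAC policy-improvement rule \eqref{E27} guarantees $R^{\tilde{h}}_{sat,t}\geq R^{\tilde{h}-1}_{sat,t}$ along the closed-loop trajectory, so that the extra term $-(R^{\tilde h -1}_{sat,t}-R^{\tilde h}_{sat,t})$ appearing in the Lyapunov difference (analogous to \eqref{E49}) is non-positive. Together with the boundedness of the rewards, this produces a new constant $\bar\Delta_2^{\tilde h}>0$ such that $W(\boldsymbol{e}_{sat,t}) > \psi_3(\|\tilde\Delta_2(t)\|_2^2) - (R^{\tilde h -1}_{sat,t}-R^{\tilde h}_{sat,t})$ for all $\|\boldsymbol{e}_{sat,t}\|_2 > \bar\Delta_2^{\tilde h}$, yielding UUB at iteration $\tilde h$ and closing the induction.

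The main obstacle I expect is handling the tension term cleanly, since unlike Theorem \ref{theo2} the satellite-level error dynamics \eqref{E55} do not contain $\boldsymbol{T}$ as a free disturbance — its size depends on the satellite state itself through \eqref{E54}. My plan is to split it: the state-dependent part $c_2\|\boldsymbol{e}_{sat}\|_2$ is absorbed into $\dot V_2$ on the negative-definite side via the spectral condition in Theorem \ref{theo3}, while the residual $c_3$ (from the tether-level ultimate bound guaranteed by Theorem \ref{theo2}) is folded into the effective disturbance $\tilde\Delta_2$. Once that decomposition is in place, the rest of the argument proceeds in parallel to Theorem \ref{theo2}, and the conclusion that $\boldsymbol{u}^{\tilde h} = \boldsymbol{u}_b + \boldsymbol{u}_l^{\tilde h}$ renders \eqref{E4} UUB for every $\tilde h = 1,2,\dots$ follows by induction.
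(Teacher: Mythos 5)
Your proposal is correct and follows essentially the same route the paper intends: the paper's own proof of Theorem \ref{theo4} is simply the one-line statement that it ``follows the same lines of Theorem \ref{theo2}.'' In fact your write-up is more complete than the paper's, since you explicitly address the one genuinely new complication at the satellite level --- the state-dependent tension term --- by splitting the bound \eqref{E54} into a part absorbed by the spectral condition $\lambda_{\min}(\tilde Q_2)-2c_2\lambda_{\max}(P_2)>0$ of Theorem \ref{theo3} and a residual constant folded into the effective disturbance, which is exactly the step the paper leaves implicit.
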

\begin{proof}
	The proof follows the same lines of THEOREM \ref{theo2}
\end{proof}

\section{SIMULATION VALIDATION}
\label{Sec:sec6}
To validate the the effectiveness and performance of the control scheme we proposed, numerical simulations are carried out. 

The desired trajectory of $l_i$ is designed as:
\begin{equation}
	\label{E60}
	l_d=\int_{t_0}^{t_f}\chi_d{\rm d}t+l_0
\end{equation}
with $\chi_d$, the desired release rate of $l_i$ designed by:
\begin{equation}
	\label{E61}
	\chi_d = a
\end{equation}
$a$ is an optional design constant and $l_0$ is the initial condition of $l_d$.

Under the reference trajectory, TTFS always retains an equilateral triangle structure while rotating and expanding. The reverse solution, using geometric relations, leads to satellites' reference track as:
\begin{equation}
	\label{E62}
	\left\{\begin{aligned}
		x_{i}^d=&\frac{l_d}{\sqrt{3}} \cos\left[w_0t+\frac{2\pi}{3}\left(i-1\right)\right]\\
		y_{i}^d=&\frac{l_d}{\sqrt{3}} \sin\left[w_0t+\frac{2\pi}{3}\left(i-1\right)\right]\\
	\end{aligned}\right.
\end{equation}
where $i=1,2,3$.

System parameters are shown in Table \ref{Table1:system parameters}. Configurations for the training and neural networks are found in Table \ref{Table2:learning parameters}
The external disturbance is specified as: $\boldsymbol{\delta}=\left[\dot{l}_1\sin l_1,\dot{l}_2\sin(l_2),\dot{l}_3\sin(l_3)\right]^T$, $\boldsymbol{d}_i=0.1[\sin t,\cos t]^T$. 

Fig. \ref{Fig9}-\ref{Fig10} illustrate training curves of the proposed algorithm. Episode timesteps of tethers' environment start to increase after approximately 3600 episodes and begin to converge at around 3800 episodes, with the episode reward stabilizing at about -200.
In the satellites environment, the episode timesteps begin to rise after roughly 1,700 episodes and converge at around 1,800 episodes, where the episode reward stabilizes at approximately -3800. After 1800 episodes, there were some fluctuations in the training performance. 
As mentioned in remark \ref{remark2}, actions are sampled from policy distributions while training the resulting trajectories can exhibit fluctuations in episode length or reward, even when the underlying policy has already converged. In testing stage, mean values are taken for deterministic control, which naturally produces stable and consistent performance. This discrepancy between training and testing performance is a typical behavior of SAC \cite{haarnoja2018soft,zheng2022soft,tang2023path} and does not necessarily indicate poor policy learning.

Meanwhile, training curves under different reset conditions are plotted in Fig. \ref{Fig9}-\ref{Fig10} with blue dash lines. Compared with setting the environment reset condition as fixed points of desired trajectories, the random reset condition proposed in this paper accelerates the convergence speed of training in tethers' environment by approximately a factor of two. This improvement can be attributed to broader state-space coverage provided by random reset conditions, which enhance the agent’s exploration. In satellites' environment, when using the fixed-point reset condition, the optimal control policy still fails to emerge even by the end of training.

Fig. \ref{Fig11} gives training curves under centralized training and unnormalized-state training. Unlike the hierarchical training framework, centralized training framework suffers from severe coupling among system states which makes it difficult for the  agent to efficiently explore effective actions and thus leads to poor convergence performance. As shown by the blue dashed lines in Fig. \ref{Fig11}, training with unnormalized states fails to converge. Because the range of variation in tether length and release speed differs by two orders of magnitude, those two features exert highly unbalanced effects on networks. Since neural networks are sensitive to feature scales, such discrepancies hinder convergence during training.
\begin{table}[!htbp]
	\centering
	\caption{system parameters}\label{Table1:system parameters}
	\begin{tabular}{c c | c c}
		\toprule[1pt]
		Parameters      & Values  &Parameters &Values\\ \hline
		\makecell[c]{$m_{i}$/kg}   & \makecell[c]{50}   &$E$/($\rm Gpa$)	&$1.528$\\
		\makecell[c]{$R$/km} & \makecell[c]{7378}    &$A$/($\rm m^2$) &$1.963\times 10^{-7}$\\
		\makecell[c]{$\mu$/($\rm m^3/s^2$)}   & \makecell[c]{$3.98603\times10^{14}$} &$D_r$/m	&0.05\\
		\makecell[c]{$\omega_0$/(rad/s)}  &\makecell[c]{$0.003$}	&$R_a$/Ohms	&0.062\\
		\makecell[c]{$l_0$/m}   &\makecell[c]{1}	&$k_e$/(V/rad/s) &0.275\\
		$k_m$/(N$\cdot$ m/Amp)	&0.275	&\makecell[c]{ $J$/(kg$\cdot \rm m^2$)}
		&\makecell[c]{0.1}\\
		$a$/(m/s)	&1	&$l_0$/m	&1\\		
		\bottomrule[1.5pt]
	\end{tabular}
\end{table}

\begin{table}[!htbp]
	\centering
	\caption{learning parameters}\label{Table2:learning parameters}
	\begin{tabular}{c c}
		\toprule[1pt]
		Parameters      & Values  \\ \hline
		\makecell[c]{sample batch size $N$} & \makecell[c]{256}   \\
		Replay buffer size	&$10^6$\\
		\makecell[c]{total training timestep $K$}   & \makecell[c]{$10^7$}   \\
		\makecell[c]{learning rate$\varkappa$}   & \makecell[c]{$3\times10^{-5}$} \\
		\makecell[c]{$[\alpha_1,\alpha_2,\alpha_3,\alpha_4,\beta_1,\beta_2,\beta_3]$}  &\makecell[c]{$[1,2.5,1,7,1,1.5,100]$}	\\
		\makecell[c]{$\gamma$}   &\makecell[c]{0.99}	\\		
		\bottomrule[1pt]
	\end{tabular}
\end{table}

\begin{figure}[!htbp] 
	\centerline{\includegraphics[width=0.48\textwidth]{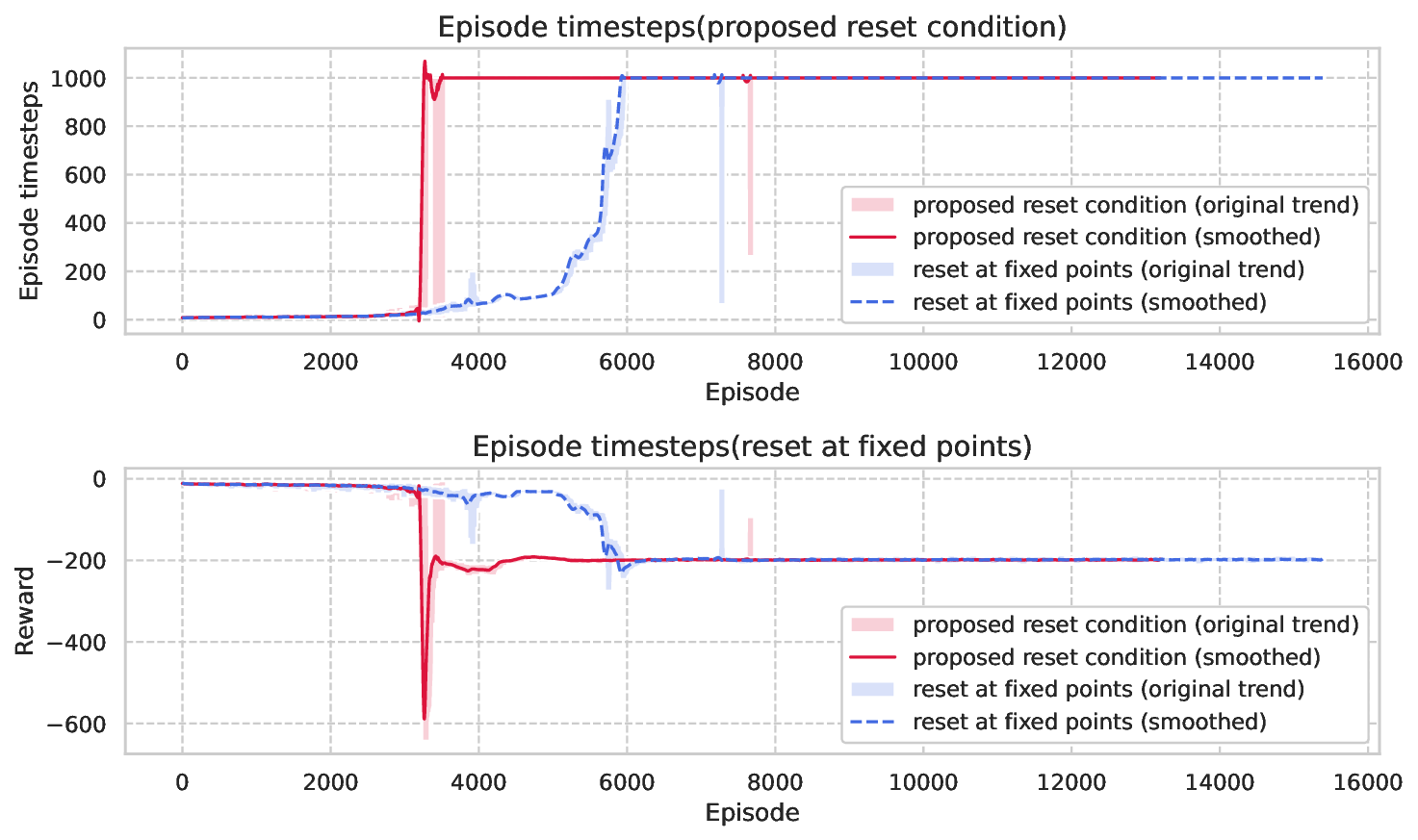}}
	\caption{Training curves of tethers}
	\label{Fig9}
\end{figure}

\begin{figure}[!htbp] 
	\centerline{\includegraphics[width=0.48\textwidth]{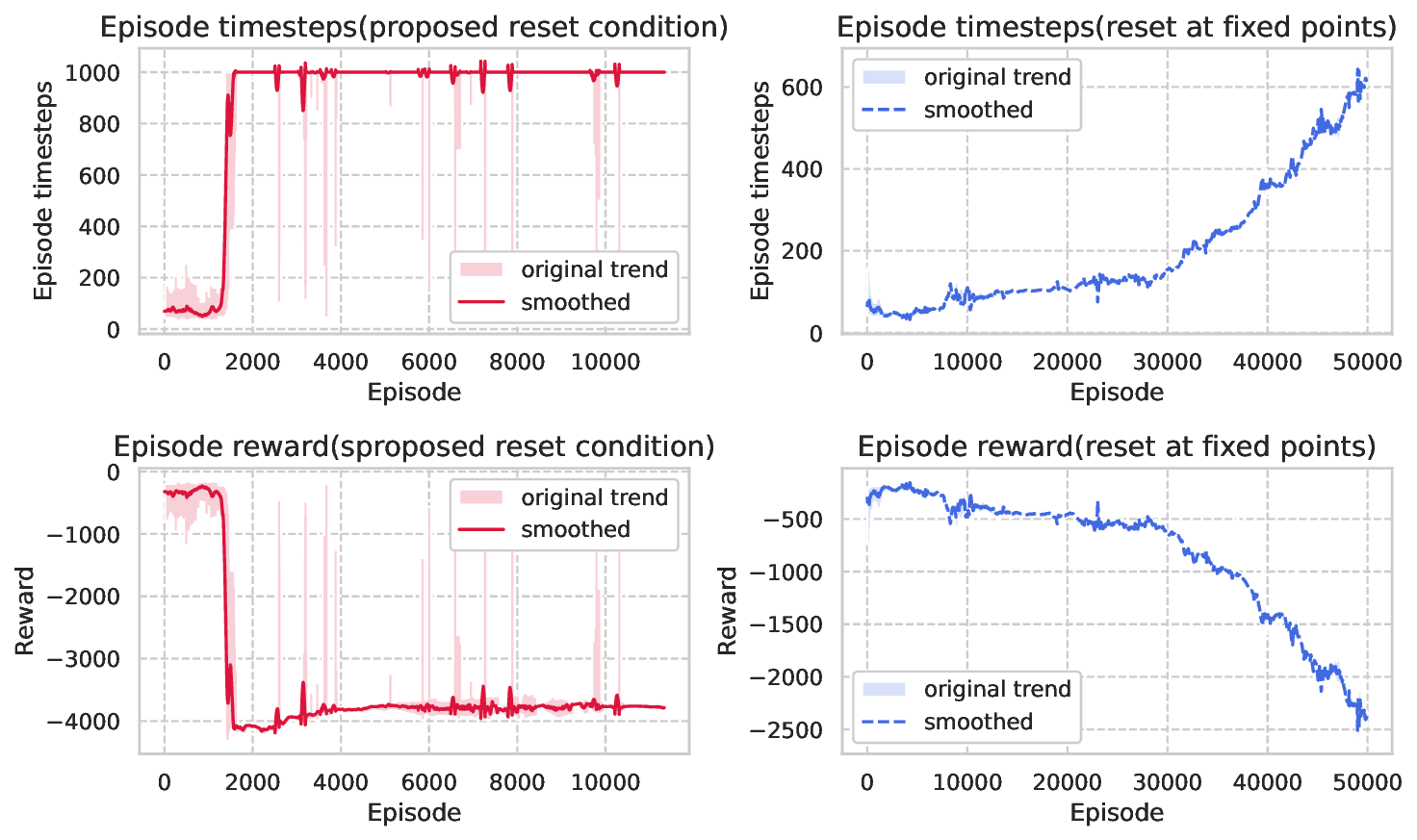}}
	\caption{Training curves of satellites}
	\label{Fig10}
\end{figure}

\begin{figure}[!htbp] 
	\centerline{\includegraphics[width=0.48\textwidth]{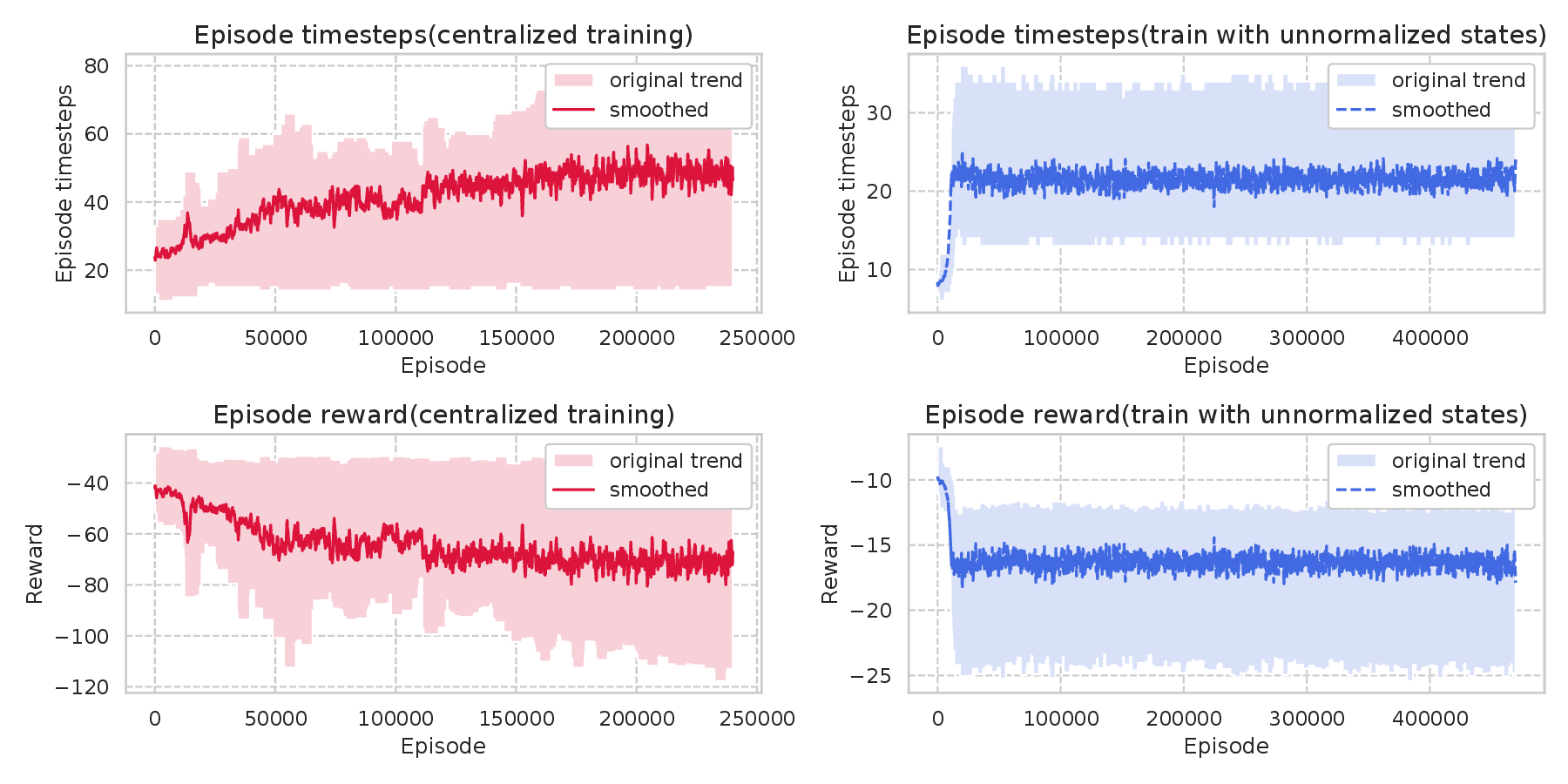}}
	\caption{Training curves under centralized training and unnormalized-state training}
	\label{Fig11}
\end{figure}


\begin{figure}[!htbp] 
	\centerline{\includegraphics[width=0.48\textwidth]{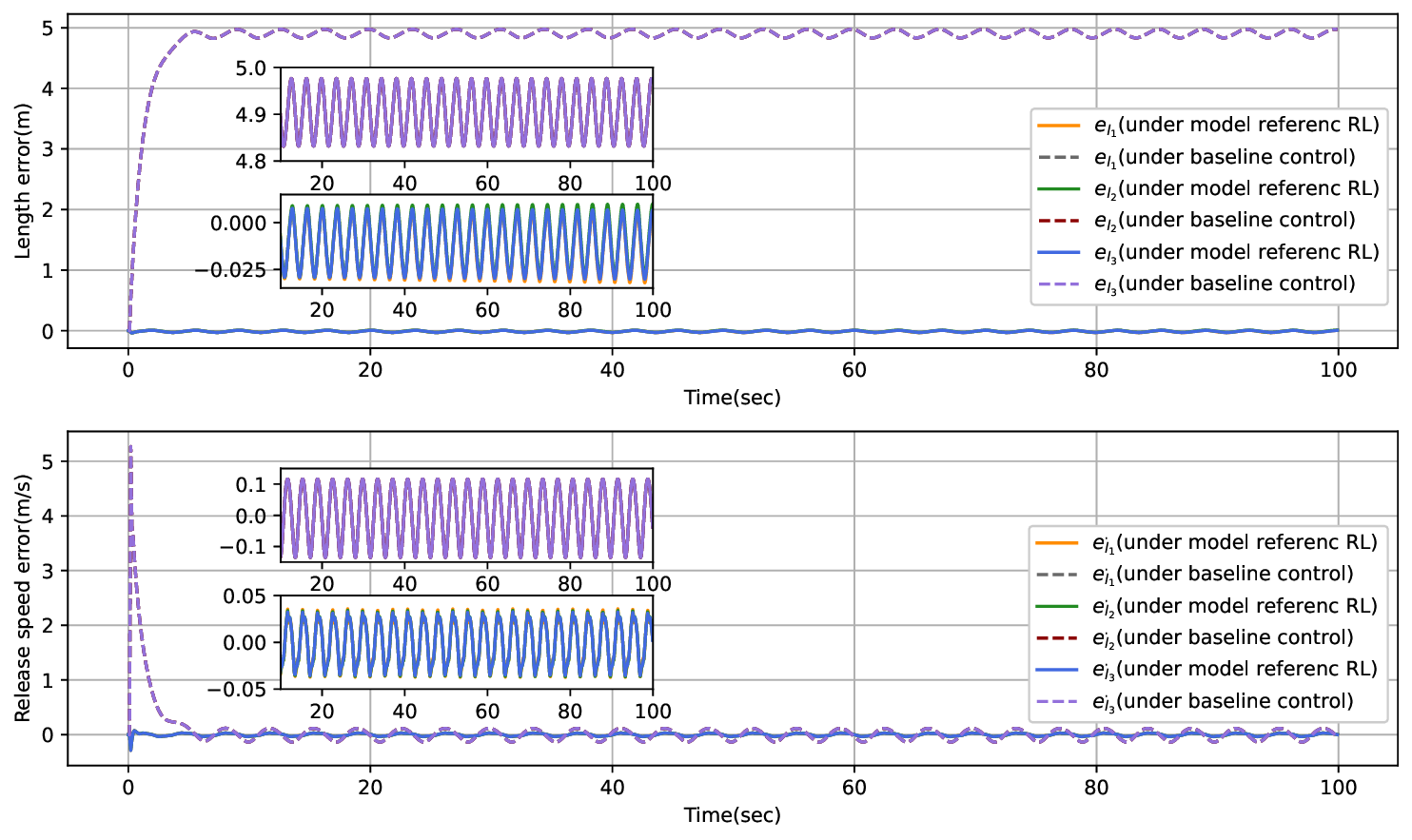}}
	\caption{Tracking error of tethers}
	\label{Fig4}
\end{figure}

\begin{figure}[!htbp] 
	\centerline{\includegraphics[width=0.48\textwidth]{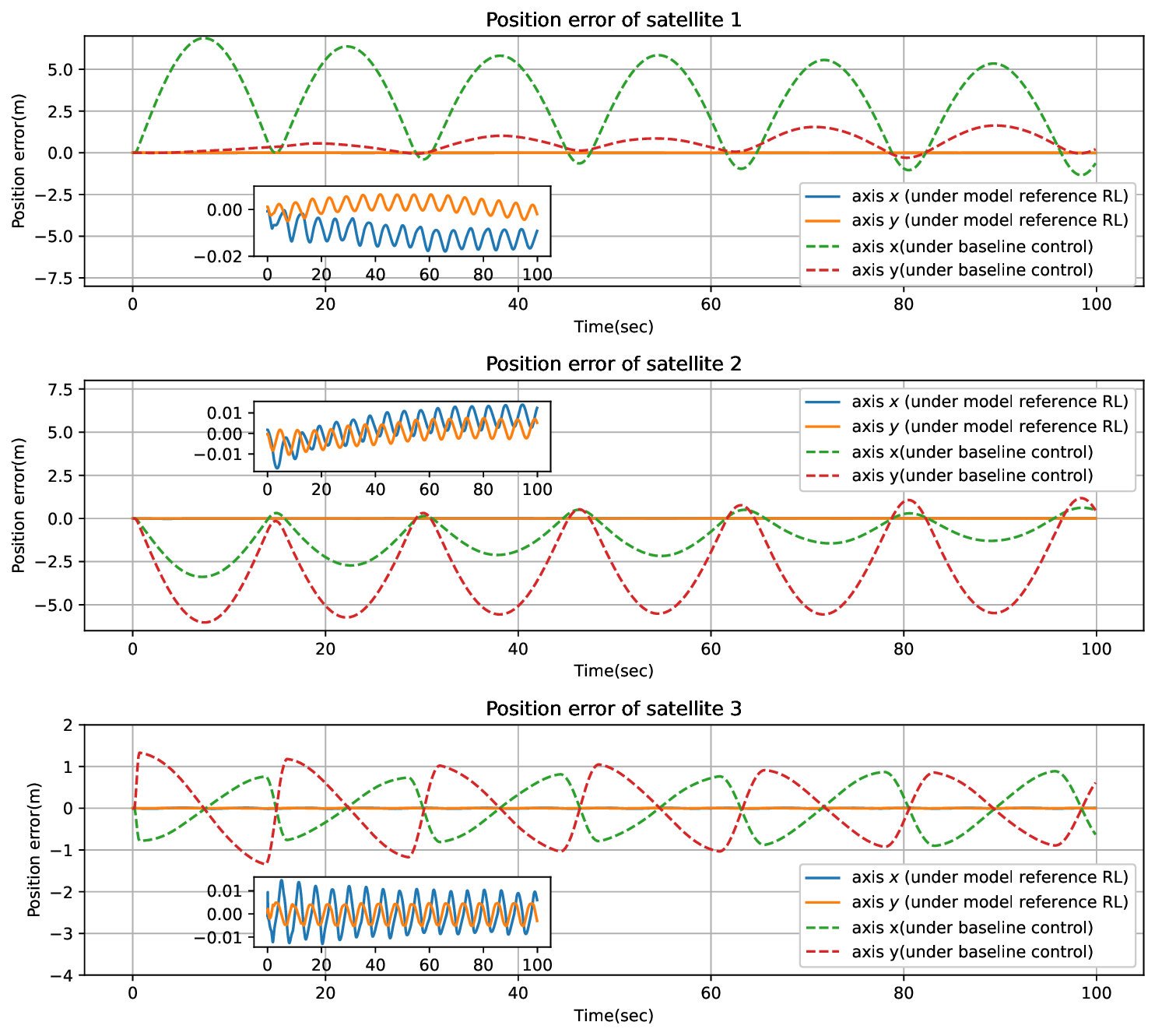}}
	\caption{Position error of satellites}
	\label{Fig5}
\end{figure}

\begin{figure}[!htbp] 
	\centerline{\includegraphics[width=0.48\textwidth]{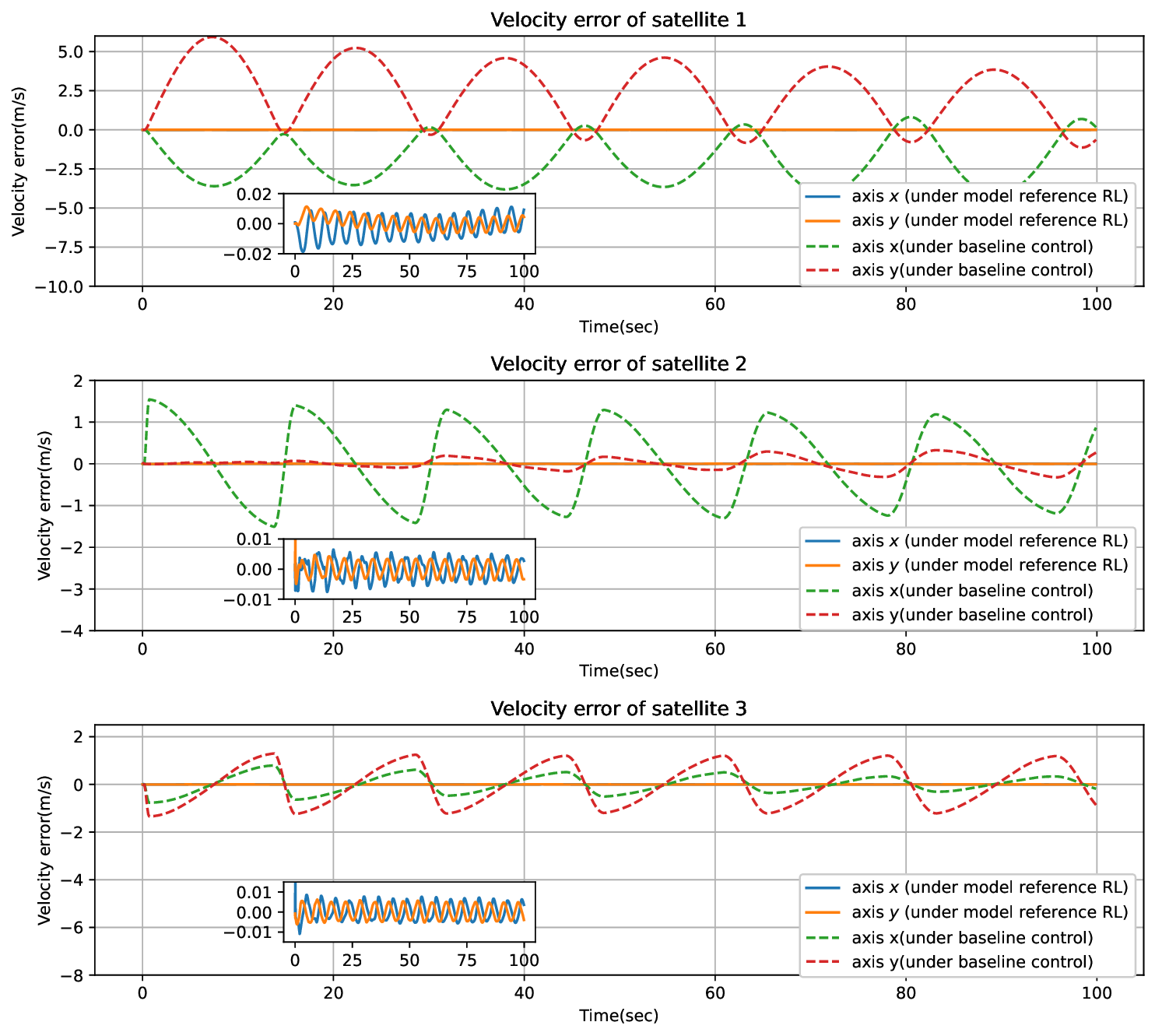}}
	\caption{Velocity error of satellites}
	\label{Fig6}
\end{figure}

\begin{figure}[!htbp] 
	\centerline{\includegraphics[width=0.48\textwidth]{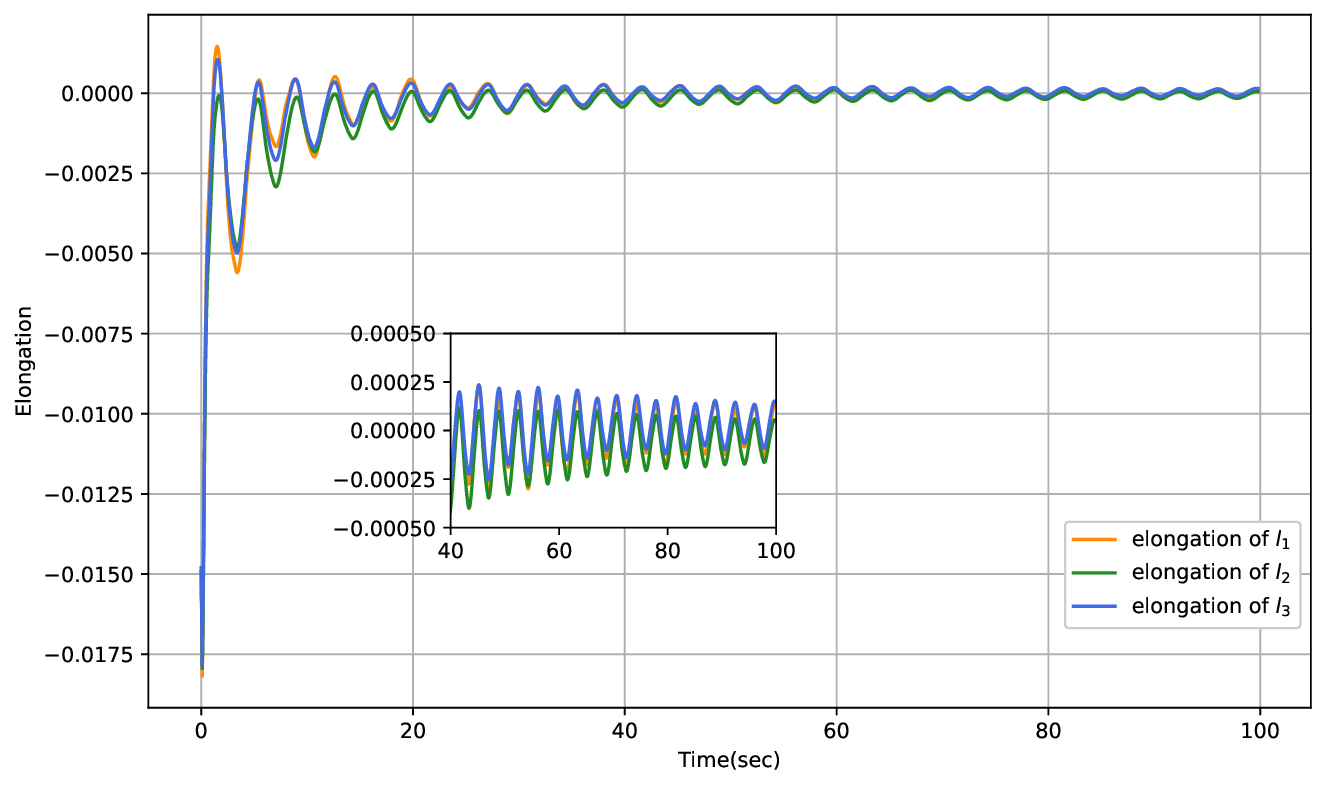}}
	\caption{Elongation of tethers}
	\label{Fig7}
\end{figure}

\begin{figure}[!htbp] 
	\centerline{\includegraphics[width=0.48\textwidth]{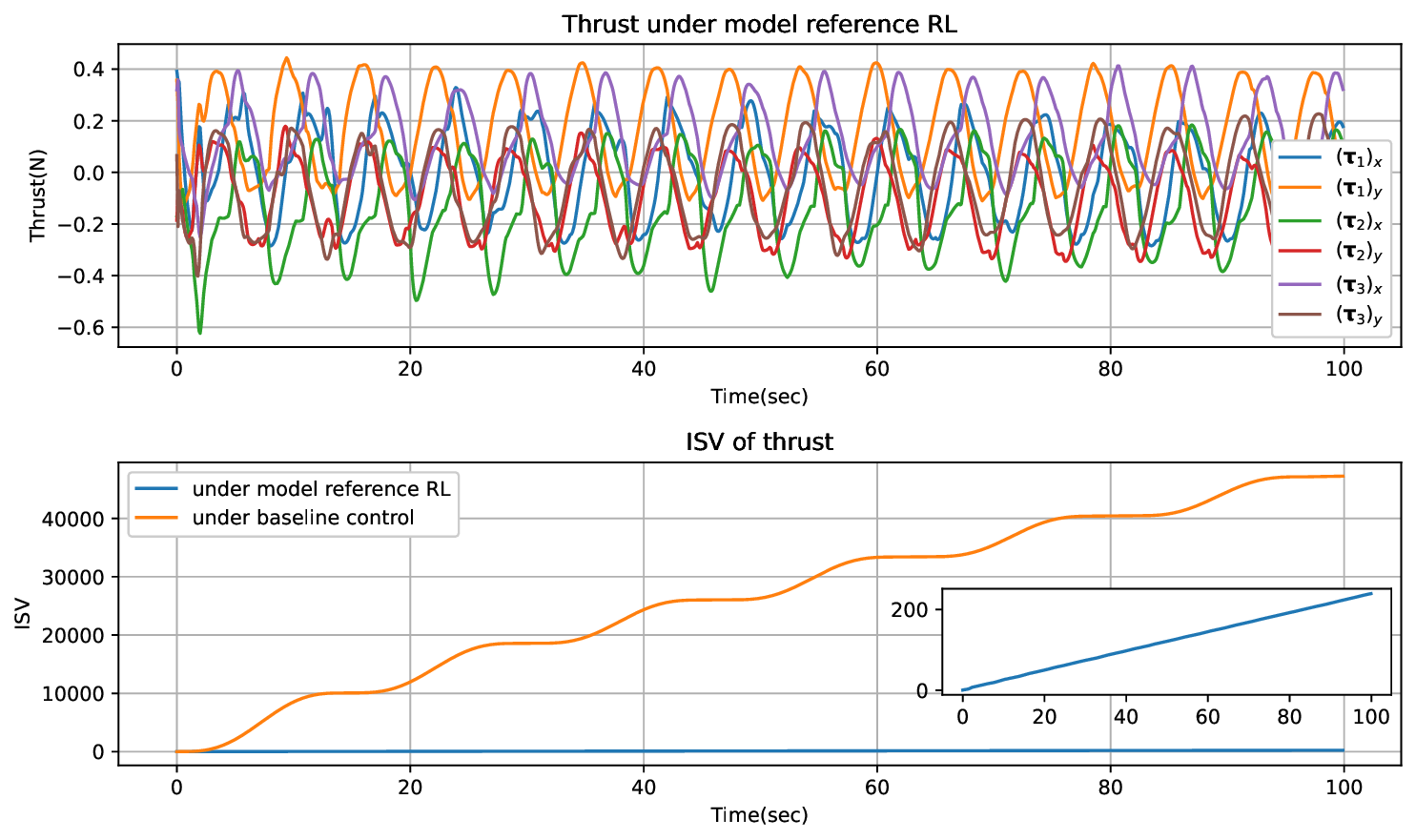}}
	\caption{Control thrust}
	\label{Fig8}
\end{figure}
Fig. \ref{Fig4}-\ref{Fig6} displays the tracking performance of our algorithom and the baseline control. For tethers, as is shown in Fig. \ref{Fig4}, the steady-state errors caused by time-varying reference trajectories are eliminated. The dynamic errors in lengths and release speed caused by outer disturbances are reduced by over 96\% and 70\% as well (from $\pm{0.8}$ to $\pm{0.025}$, $\pm{0.12}$ to $\pm{0.0.035}$ respectively). For node satellites, as illustrated in Fig. \ref{Fig5}-\ref{Fig6}, both positional and velocity dynamic errors resulting from tension and external disturbances are diminished by more than 99\% (from meter level down to centimeter level).

As shown in Figure \ref{Fig7}, elongation of tehters remain between -1.8\% and 2\textperthousand, indicating that the tether stays in a slightly tensioned or relaxed state throughout the deployment process. These simulation results align with the reward function defined in Eq. \ref{E19}. Term $R_{sat-dis,t}$ reaches its maximum value when tether lengths exactly match the distances between node satellites. It remains near maximum even when tethers in slightly taut or relaxed states. 

Control input, plotted in Fig. \ref{Fig8}, remains bounded by $\pm{1}$N for the entire deployment process. As a quantitative measure of fuel consumption, the Integral of the Squared Value (ISV) of thrust is formulated as:
\begin{equation}\label{E63}
	ISV_{u} = \int^{t_f}_{t_0}\boldsymbol{u}^T\boldsymbol{u}{\rm d}t
\end{equation}
Our control strategy cuts the ISV of thrust by two orders of magnitude compared to the baseline method.

\section{Conclusion}
\label{Sec:sec7}
In this paper, we present a model-reference reinforcement learning control strategy for space triangular tethered formation systems (TTFS). By combining a nominal model-based baseline controller with a Soft Actor–Critic compensator, the proposed method leverages the strengths of both model-based and model-free approaches. A hierarchical training framework, together with specially designed reward functions, reset conditions, and normalization procedures, effectively mitigates convergence difficulties caused by high-dimensional, strongly coupled states.

Rigorous Lyapunov-based analysis provides theoretical guarantees of closed-loop stability. Numerical simulations verify that the controller can track time-varying reference trajectories with high accuracy while maintaining the tethers in slightly tensioned or relaxed states, and drastically reducing fuel consumption compared to a conventional baseline controller.

These results confirm that the proposed control framework is an effective and scalable solution for TTFS deployment control under uncertainties and disturbances. Future work will focus on extending the approach to three-dimensional orbital motion, incorporating actuator faults, and validating the method through hardware-in-the-loop experiments.


\bibliographystyle{IEEEtran}
\bibliography{IEEEabrv,abfile1} 
\vspace{5pt}

\begin{IEEEbiography}[{\includegraphics[width=1in,height=1.25in,clip,keepaspectratio]{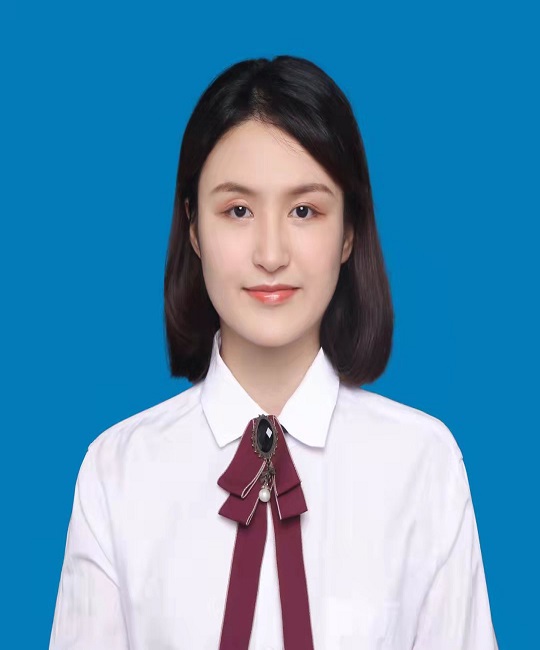}}]{Xinyi Tao}
{\space}(Student Member, IEEE) received the B.S. degree in Aircraft Control and Information Engineering from Sichuan University, Sichuan, China, in 2021. She is currently pursuing the Ph.D. degree in navigation, guidance, and control with the School of Astronautics, Northwestern Polytechnical University, Xi’an, China. 

Her current research interests include dynamics and control of tethered formation and formation control.
\end{IEEEbiography}
\vspace{-45pt}

\begin{IEEEbiography}[{\includegraphics[width=1in,height=1.25in,clip,keepaspectratio]{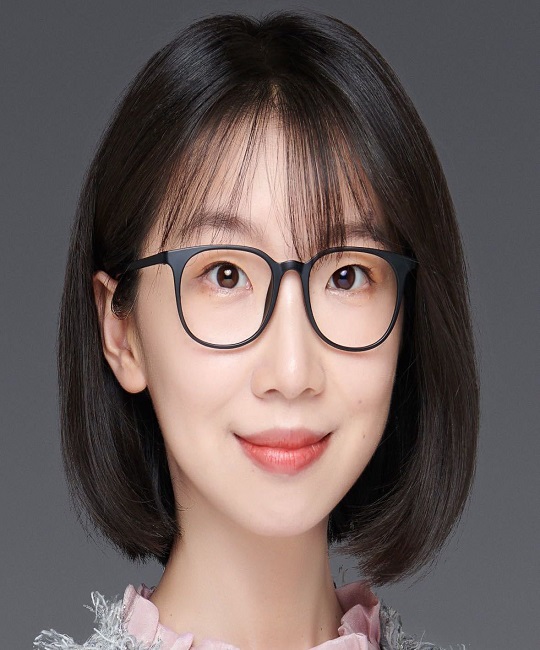}}]{Fan Zhang}
	{\space}(Member, IEEE) received the B.S.degree in detection, guidance, and control technology, in 2009, the M.S. and Ph.D. degrees in navigation, guidance, and control, form School of Astronautics, Northwestern Polytechnical University, Xi’an, China, in 2012 and 2017,respectively.
	
	She is currently a Professor with the School of Astronautics, Northwestern Polytechnical University. Her research interests include dynamic and control of tethered space robot.
\end{IEEEbiography}
\vspace{-45pt}

\begin{IEEEbiography}[{\includegraphics[width=1in,height=1.25in,clip,keepaspectratio]{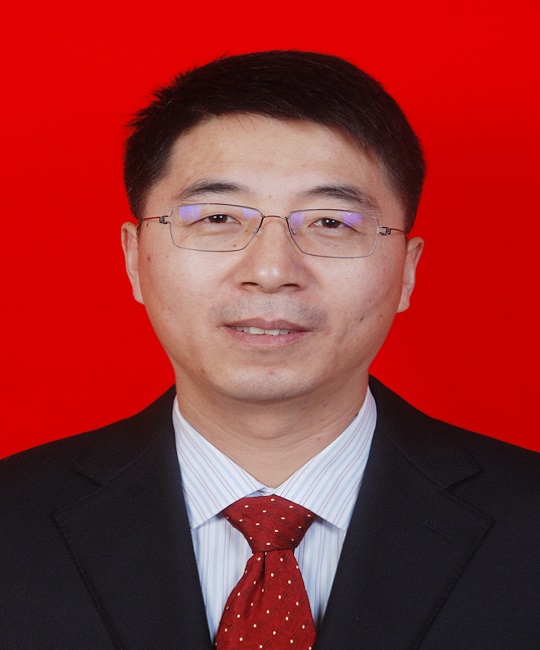}}]{Panfeng Huang}
	{\space}(Senior Member, IEEE), received the B.S. degree in test and measurement technology and the M.S. degree in navigation guidance and control from Northwestern Polytechnical University, Xi’an, China, in 1998 and 2001, respectively, and the Ph.D. degree in automation and robotics from the Chinese University of Hong Kong, Hong Kong,in 2005.
	
	He is currently a Professor with the School of Astronautics and the National Key Laboratory of Aerospace Flight Dynamics, Northwestern
	Polytechnical University, where he is the Vice Director of the Research Center for Intelligent Robotics. His current research interests include tethered space robotics, intelligent control, machine vision, and space teleoperation. 
\end{IEEEbiography}

\end{document}